\newcommand{\newterm}[1]{\emph{#1}}
\newcommand{\bigO}[1]{O\left(#1\right)}
\newcommand{\bigOt}[1]{\tilde{O}\left(#1\right)}
\newcommand{\bigTh}[1]{\Theta\left(#1\right)}
\newcommand{\bigTht}[1]{\tilde{\Theta}\left(#1\right)}
\newcommand{\bigOm}[1]{\Omega\left(#1\right)}
\newcommand{\bigOmt}[1]{\tilde{\Omega}\left(#1\right)}
\newcommand{\card}[1]{\left|#1\right|}
\newcommand{\abs}[1]{\left|#1\right|}
\newcommand{\dist}[1]{||#1||}
\newcommand{\prob}[1]{\mathrm{Pr}\left({#1}\right)}
\newcommand{\expe}[1]{\mathrm{E}\left({#1}\right)}
\newcommand{\vari}[1]{\mathrm{Var}\left({#1}\right)}
\newcommand{\sqrtn}{\sqrt{n}}
\newcommand{\isla}[3]{I_{#3}(#1; #2)}
\newcommand{\bs}{Broadcasting scenario}
\newcommand{\gs}{Gossiping scenario}
\newcommand{\bt}{T_{\mathrm{B}}}
\newcommand{\gt}{T_{\mathrm{G}}}
\newcommand{\B}{\mathcal{B}}
\newcommand{\I}{\mathcal{I}}
\newcommand{\Z}{\mathbb{Z}}
\newcommand{\dens}[1]{p^{#1}(n)}
\newcommand{\point}[1]{#1} 
\newcommand{\tf}[1]{\small{#1}}
\newtheorem{defi}{Definition}
\newtheorem{theo}{Theorem}
\newtheorem{lemm}{Lemma}
\begin{document}
\title{Infectious Random Walks%
	\thanks{Support for the first three authors was provided, in part, by MIUR of Italy
	under project AlgoDEEP, and by the University of
	Padova under the Strategic Project STPD08JA32 and Project
	CPDA099949/09.
	This work was done while the first author was visiting the Department of Computer Science of Brown University,
	partially supported by ``Fondazione Ing.~Aldo Gini'', Padova, Italy.}%
}

\author{
Alberto Pettarin \and Andrea Pietracaprina \and Geppino Pucci\\
	\tf{Department of Information Engineering,} 
	\tf{University of Padova}\\
	\tf{\texttt{\{pettarin,capri,geppo\}$@$dei.unipd.it}}
\and%
Eli Upfal\\
	\tf{Department of Computer Science,} 
	\tf{Brown University}\\
	\tf{\texttt{eli$@$cs.brown.edu}}
}

\date{}
\maketitle{}

\vspace{-0.5cm}
\begin{abstract}
We study the dynamics of information (or virus) dissemination
by $m$ mobile agents performing independent random walks
on an $n$-node grid.
We formulate our results in terms 
of two scenarios: \emph{broadcasting} and \emph{gossiping}.
In the broadcasting scenario,
the mobile agents are initially placed uniformly at random
among the grid nodes.
At time $0$, one agent is informed of a rumor
and starts a random walk.
When an informed agent meets an uninformed agent,
the latter becomes informed and starts a new random walk.
We study the \emph{broadcasting time} of the system, that is,
the time it takes for all agents to know the rumor.
In the gossiping scenario, each agent is given a distinct rumor
at time $0$ and all agents start random walks.
When two agents meet, they share all rumors they are aware of.
We study the \emph{gossiping time} of the system,
that is, the time it takes for all agents to know all rumors.
We prove that both the broadcasting and the gossiping times are
$\bigTht{n / \sqrt{m}}$ \emph{w.h.p.},
thus achieving a tight characterization up to logarithmic factors.
Previous results for the grid provided bounds which were weaker
and only concerned average times.
In the context of virus infection, a corollary of our results
is that static and dynamically moving agents
are infected at about the same speed.
\end{abstract}

\newpage

\section{Introduction}
The dynamics of multiple random walks moving in a common domain
provides an attractive combinatorial framework for studying diffusion processes
such as rumor spreading and virus infection.
In this work we consider two related scenarios on  an $n$-node square grid (or torus) $G_n$.

\vspace{0.1cm}
\noindent{\bf \bs:}
		$m$ agents are initially placed uniformly and independently at random in the nodes of $G_n$.
		At time 0 one agent is informed of a rumor and  starts a random walk.
		When an informed agent  meets an uninformed agent at a node,
		the latter becomes informed of the rumor and starts its own random walk.
		How long does it take until all agents are informed (\newterm{broadcasting time})?

\vspace{0.1cm}
\noindent{\bf \gs:}
		$m$ mobile agents start independent random walks
		from $m$ nodes of $G_n$ chosen uniformly and independently at random.
		At time 0 each agent has a distinct rumor.
		When two agents meet, each receives a copy of all  rumors carried by the other agent.
		How long does it take until all of the $m$ agents
		have received all of the $m$ rumors  (\newterm{gossiping time})?

\vspace{0.1cm}
\sloppy{
We prove that both the broadcasting and the gossiping times are $\bigTht{n/\sqrt{m}}$
with high probability,}
thus achieving a tight characterization
(up to polylogarithmic factors in $n$) of the complexity of rumor spreading
in both scenarios. In the context of virus infection, a corollary of our results
is that static agents placed at random locations and
dynamic agents moving as independent random walks
are infected at about the same speed.

Although at a first glance the two models look similar,
we do not have a simple reduction between them, hence we
develop separate upper and lower bound proofs for each scenario.
While the general approach employed in both cases is  similar,
there are some important differences in addressing the time dependencies,
since in the first scenario random walks hit static agents,
while in the second, random walks of informed agents need
to collide with the random walks of the uninformed ones.

It is not hard to verify a $\bigTht{{n^2}/{{m}}}$ tight bound
for the broadcasting and gossiping times on an $n$-node line or ring.
Combined with our results, this yields a $\bigTht{({n^2}/{{m}})^{1/d}}$ bound for $n$ node
$d$-dimensional grids, with $d=1, 2$. It remains open whether 
this relation generalizes to finite grids of higher dimensions.

\subsection{Related work}
Information dissemination has been extensively studied in the
literature under a variety of scenarios and 
objectives. Due to space limitations, we restrict our attention to
the results more directly related to our work.

A  prolific line of research has addressed broadcasting and gossiping in static graphs,
where the nodes of the graph represent active entities which exchange messages
along incident edges according to specific protocols
(e.g., \emph{push}, \emph{pull}, \emph{push-pull}).
The most recent results in this area relate the
performance of the protocols to expansion properties of the underlying
topology, with particular attention to the case of social networks,
where broadcasting is often referred to as \emph{rumor spreading}
\cite{ChierichettiLP10}.
(For a relatively recent, comprehensive survey on this
subject, see \cite{HromkovicKPRU05}.)

With the advent of mobile ad-hoc  networks there has been  growing
interest in studying information dissemination in dynamic scenarios,
where a number of agents move either in a continuous space or along
the nodes of some underlying graph and exchange information when their
positions satisfy a specified proximity constraint.
In \cite{ClementiMPS09,ClementiPS09} the authors study the time it takes
to broadcast information from one of $m$ mobile agents to all others.
The agents move on a square grid of $n$ nodes and in each time
step, an agent can (a) exchange information with all agents at distance at
most $R$ from it, and (b) move to any random node at distance at
most $\rho$ from its current position. The results in these papers 
only apply to a very dense scenario where the number of agents
is linear in the number of grid nodes
(i.e.,  $m=\bigTh{n}$).
They show that the broadcasting time is $\bigTh{\sqrt{n}/R}$ w.h.p.,
when $\rho = \bigO{R}$ and $R = \bigOm{\sqrt{\log n}}$ \cite{ClementiMPS09}, and it is
$\bigO{(\sqrt{n}/\rho)+\log n}$ w.h.p., when $\rho =
\bigOm{\max\{R,\sqrt{\log n}\}}$ \cite{ClementiPS09}.
These results crucially rely on   $R+\rho = \bigOm{\sqrt{\log n}}$,
which implies that the  range of agents' communications or  movements  
at each step defines a connected graph.

In more realistic scenarios, like the ones adopted in this paper, 
the number of agents is decoupled from
the number of locations (i.e., the graph nodes) and a smoother
dynamics is enforced by limiting agents to move only between
neighboring nodes.
A reasonable setting consists of a set of multiple, simple
random walks on a graph, one for each agent, with communication
between two agents occurring when they meet at the same node.
One variant of this setting is the so-called \emph{Frog Model}
(corresponding to our Broadcasting scenario),
where initially one  of $m$ agents is  active
(i.e., is performing a random walk), 
while the remaining agents do not move.
Whenever an active agent hits an inactive one,
the latter is activated and starts its own random walk.
This model was mostly studied in the infinite grid focusing on
the asymptotic (in time) shape of the set of vertices
containing all active agents~\cite{AlvesMP02, KestenS03}.
A model similar to our Gossiping scenario
is often employed to model the spreading of computer viruses
in networks and the spreading time is also referred to
as \emph{infection time}.
In \cite{DimitriouNS06}, the authors provide a general bound
on the average infection time when $m$ agents (one of them initially
affected by the virus) move in an $n$-node graph.
For general graphs, this bound is $\bigO{t^* \log m}$, where $t^*$ denotes the maximum
average meeting time of two random walks on the graph, and the
maximum is taken over all pairs of starting locations of the random walks.
Also, in the paper tighter bounds are provided for the complete
graph and for expanders. Observe that the  $\bigO{t^* \log m}$  bound specializes to
$\bigO{n \log n \log m}$ for the $n$-node grid by applying the known
bound on $t^*$ of \cite{AldousF98}.
A tight bound of $\bigTh{n \log n \log m /m}$ on the infection time on the grid
is claimed in \cite{WangKK08},
based on a rather informal argument where some unwarranted
independence assumptions are made.
Our results show that this latter bound is incorrect.

Finally, a related line of research deals with the cover time
of a random walk on a graph, that is, the expected time when all of the graph
nodes are touched by the random walk.
(See \cite{AldousF98} for a comprehensive account of the relevant literature.)
The cover time is strictly related to the hitting time \cite{BroderKRU94},
namely the average time required of a random walk to reach a specified node.
For $n$-node meshes, it is known that the hitting time is $\bigO{n \log n}$,
while the cover time is $\bigO{n \log^2 n}$ \cite{Zuckerman92,ChandraRRST97}.
Bounds on the speed-up achieved on the cover time by multiple random
walks as opposed to a single one are proved in \cite{AlonAKKLT08,ElsasserS09}.

\subsection{Organization of the paper}
The rest of the paper is organized as follows.
In Section~\ref{sec:prelim}, we define the problem of interest
in the two scenarios and establish some technical facts 
which are used in the analysis.
Section~\ref{sec:bs} and Section~\ref{sec:gs} contain our results
for the Broadcasting and Gossiping scenario, respectively.

\section{Preliminaries}
\label{sec:prelim}

We study the dynamics of multiple independent random walks moving
on an $n$-node 2-dimensional square grid  $G_n=(V_n,E_n)$, where
$V_n=\{(i,j)~|~1\leq i,j \leq \sqrtn\}$, and $E_n
= 
\{((i,j),(i,j+1))~|~1\leq i \leq \sqrtn, 1 \leq j < \sqrtn\}
\cup \{((i,j),(i+1,j))~|~1\leq i < \sqrtn, 1 \leq j \leq \sqrtn\}$.
We also add  self loops to the boundary nodes
$\{(1,j),(\sqrtn,j),(i,1),(i,\sqrt{n})~|~1\leq i,j \leq \sqrt{n}\}$
so to avoid bipartiteness and equalize the steady state distribution. 
We remark that all results in the paper can be immediately ported to the torus
graph, where wrap-around edges substitute the self-loops.

Although the two scenarios are defined with respect to
a fixed number $m$ of agents distributed uniformly and independently at random
among the nodes of $G_n$ (the \emph{exact model}),
technically, when deriving the upper bounds to broadcasting and gossiping times,  it is easier to work with a 
slightly modified model in which each node $v$ originally holds $m_v$ agents, where $m_v$
is distributed as a binomial variable $B(m,1/n)$ independently of the  other nodes (the \emph{binomial model}).
We denote by $\dens{}=m/n$ the \newterm{density} of the binomial model,
and by $\tilde{m}$ the random variable denoting the number of agents in a given instance of the model.
For a sufficiently large $\dens{}$, we have $\tilde{m}=\bigTh{m}$ with high probability:
\begin{lemm}
\label{lemm:ConcAgents}
Let $\dens{} = m/n \geq (17 \log n) / n$, then, with probability $1 - 1/n^2$,
the number of agents in the system, $\tilde{m}$, satisfies
 $\frac{1}{2} n \dens{} \leq \tilde{m} \leq \frac{3}{2} n \dens{}$.
\end{lemm}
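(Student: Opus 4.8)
The plan is to recognize $\tilde{m}$ as a sum of independent random variables and apply a Chernoff bound. This is precisely the reason the binomial model is introduced: in it the per-node populations $m_v \sim B(m,1/n)$ are mutually independent across the $n$ nodes, so their sum $\tilde{m} = \sum_{v \in V_n} m_v$ is itself a binomial $B(nm, 1/n)$, equivalently a sum of $nm$ independent Bernoulli trials of success probability $1/n$. Its mean is $\mu := \expe{\tilde{m}} = nm\cdot(1/n) = m = n\dens{}$, exactly the midpoint of the target window $[\frac12 n\dens{}, \frac32 n\dens{}]$.

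With the mean identified, I would apply the standard multiplicative Chernoff bounds with relative deviation $\delta = 1/2$. The event $\tilde{m} < \frac12 n\dens{}$ is the lower tail $\tilde{m} < (1-\delta)\mu$, bounded by $e^{-\mu\delta^2/2} = e^{-\mu/8}$, and the event $\tilde{m} > \frac32 n\dens{}$ is the upper tail $\tilde{m} > (1+\delta)\mu$, bounded by $e^{-\mu\delta^2/3} = e^{-\mu/12}$ (the upper tail, having the weaker exponent, is the binding one). A union bound over the two tails then controls the probability that $\tilde{m}$ leaves the window.

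The final step substitutes the hypothesis $\dens{} \geq (17\log n)/n$, i.e. $\mu = n\dens{} \geq 17\log n$, into the two tail estimates; the numerical constant $17$ is calibrated so that the sum of the two resulting probabilities is at most $1/n^2$, which yields the claimed confidence $1 - 1/n^2$. There is no genuine obstacle here---the statement is a routine concentration estimate---so the only points deserving care are the conceptual one that cross-node independence (absent in the exact model, where the total is fixed at $m$) is what licenses the clean Chernoff application, and the mechanical one of checking the constant $17$ against the exact form of the Chernoff bound employed.
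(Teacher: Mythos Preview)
Your proposal is correct and essentially identical to the paper's proof: both recognize $\tilde{m}$ as a binomial random variable with mean $m = n\dens{}$ and apply a multiplicative Chernoff--Hoeffding bound with relative deviation $\delta = 1/2$. The only cosmetic difference is that the paper invokes a single two-sided estimate $\prob{|\tilde{m}-m|\geq m/2}\leq 2e^{-m/8}$ (and appeals to ``sufficiently large $n$'' for the $1/n^2$ conclusion) rather than splitting the tails as you do.
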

\begin{proof}
The number of agents is a binomial random variable with expectation ${m} = 17 \log n$.
Applying a Chernoff--Hoeffding bound to their number,
yielding
\[
	\prob{\abs{\tilde{m} - m} \geq \frac{1}{2}m}
		\leq 2 e^{-\frac{1}{8} m}
		\leq \frac{1}{n^2},
\]
for a sufficiently large $n$.
\end{proof}
Observe that a high-probability result for the binomial model with $\dens{}=m/n$ implies
a similar high-probability result in the exact model with $m$ agents
(see~\cite[Corollary~5.9]{MitzenmacherU05}).

Moving agents follow independent, simple, symmetric random walks,
that is, at each step an agent moves to a neighbor of its current location, chosen uniformly at random. Moreover,
time is discrete, and   moves are synchronized.
For a given rumor $r$, we say that an agent is \newterm{informed} (of $r$) if it has a copy of $r$,
otherwise it is \newterm{uninformed} (of $r$). In both the broadcasting and gossiping scenarios defined in the introduction,
whenever an agent informed of $r$ meets another agent uninformed of $r$, the latter becomes informed of $r$.

\begin{defi}[broadcasting time, gossiping time]
In the \bs,
the broadcasting time $\bt$ is the first time 
at which all  agents are informed of the single rumor.
In the \gs, the gossiping time $\gt$ is the first time
at which all  agents are informed of all  rumors.
\end{defi}

We denote by $\dist{x-y}$ the Manhattan ($L_1$) distance between nodes $x$ and $y$ of $G_n$.
Our analysis uses the following two technical lemmas which characterize
the set of nodes visited by a random walk within a given time interval.
\begin{lemm}
\label{lemm:SRW}
Consider a random walk on $G_n$, starting at time $t=0$ at node $v_0$.
There exists a positive constant $c_1$ such that
for any $v \neq v_0$,
\[
	\prob{v \mbox{ is visited within } (\dist{v-v_0})^2 \mbox{ steps}}
		\geq \frac{c_1}{\max \{1,\log (\dist{v-v_0})\}}.
\]
\end{lemm}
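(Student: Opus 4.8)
The plan is to reduce the statement to two-sided estimates on the truncated Green's function of the walk, and to combine a lower bound on the expected number of visits to $v$ with an upper bound on the number of returns to $v$. Write $d=\dist{v-v_0}$, let $p_s(x,y)$ denote the $s$-step transition probability of the walk on $G_n$, and set $G_t(x,y)=\sum_{s=0}^{t}p_s(x,y)$, the expected number of visits to $y$ during the first $t$ steps starting from $x$. Let $\tau$ be the first time the walk visits $v$. Conditioning on $\tau=s\le t$ and applying the Markov property at $\tau$, the expected number of visits to $v$ during $[\tau,t]$ equals $G_{t-s}(v,v)\le G_t(v,v)$, so the expected number of visits decomposes as $G_t(v_0,v)=\prob{\tau\le t}\cdot\expe{\mbox{visits in }[\tau,t]\mid\tau\le t}$, giving
\[
	G_t(v_0,v) \le \prob{\tau\le t}\,G_t(v,v),
	\qquad\mbox{hence}\qquad
	\prob{\tau\le t} \ge \frac{G_t(v_0,v)}{G_t(v,v)}.
\]
Taking $t=d^2$, it therefore suffices to show that $G_{d^2}(v_0,v)=\bigOm{1}$ and $G_{d^2}(v,v)=\bigO{\max\{1,\log d\}}$.

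For the denominator (the returns to $v$), I would invoke the on-diagonal heat-kernel bound $p_s(v,v)=\bigO{1/s}$, valid for $1\le s\le n$ for two-dimensional lattice walks (and $p_s(v,v)=\bigTh{1/n}$ once $s$ exceeds the mixing time, which only helps). Summing, and using $d^2\le\bigO{n}$ since $d$ is at most the diameter of $G_n$,
\[
	G_{d^2}(v,v) = \bigO{1+\textstyle\sum_{s=1}^{d^2}1/s} = \bigO{\log(d^2)} = \bigO{\max\{1,\log d\}}.
\]
For the numerator I would use the matching near-diagonal lower bound from the local central limit theorem: there are constants $c',c''>0$ with $p_s(v_0,v)\ge c''/s$ whenever $c'd^2\le s\le d^2$ and $s\equiv d\pmod 2$, the parity restriction accounting for the near-bipartiteness of the interior of the grid. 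Indeed, for such $s$ the diffusive spread $\sqrt{s}$ is $\bigTh{d}$, so $v$ lies in the bulk of the distribution and $d^2/s=\bigTh{1}$. Restricting the sum defining $G_{d^2}(v_0,v)$ to these $\bigTh{d^2}$ values of $s$ yields $G_{d^2}(v_0,v)\ge\bigTh{d^2}\cdot\bigOm{1/d^2}=\bigOm{1}$. Substituting both estimates into the displayed inequality gives $\prob{\tau\le d^2}\ge c_1/\max\{1,\log d\}$. The degenerate case $d=1$ is immediate, since then $t=1$ and the walk steps directly onto the neighbor $v$ with constant probability, matching the denominator $\max\{1,\log 1\}=1$.

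The main obstacle is making these heat-kernel estimates rigorous on the finite grid $G_n$ rather than on $\Z^2$: the reflecting boundary (encoded by the self-loops) and the near-periodicity of the interior both obstruct a direct appeal to the local central limit theorem. I would handle this by relating the walk on $G_n$ to the walk on the torus $\Z_{\sqrtn}^2$ (equivalently, to $\Z^2$ up to the relevant time scale $d^2=\bigO{n}$) through a reflection coupling, where the transition probabilities are explicit via Fourier analysis and the Gaussian on- and off-diagonal bounds are classical; the self-loops only add holding and can be absorbed into the constants. Some care is needed to keep the parity bookkeeping consistent, so that the $\bigTh{d^2}$ surviving terms in the numerator genuinely carry the $\bigOm{1/d^2}$ lower bound, and to note that the finitely many bounded values of $d$ are covered by adjusting constants. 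Once the two-sided heat-kernel bounds are established, the Green's-function inequality above delivers the claim with a single universal constant $c_1$.
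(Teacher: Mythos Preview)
Your argument is correct, but it is not the route the paper takes. The paper's proof is two lines: it cites Theorem~2.2 of \cite{AlvesMP02}, where exactly this hitting-probability bound is established for simple random walk on the infinite grid $\Z^2$, and then invokes the reflection principle to transfer the estimate to the finite grid $G_n$ with boundary self-loops (each walk on $\Z^2$ that crosses a boundary of $G_n$ and returns corresponds, by folding across the boundary, to a walk that stays inside $G_n$ and visits the same $G_n$-nodes, so restricting to $G_n$ can only help up to constants). What you have written is, in effect, the content of that cited reference: the first-entrance decomposition $\prob{\tau\le t}\ge G_t(v_0,v)/G_t(v,v)$, the on-diagonal upper bound $p_s(v,v)=\bigO{1/s}$ giving $G_{d^2}(v,v)=\bigO{\log d}$, and the local-CLT lower bound $p_s(v_0,v)=\bigOm{1/s}$ on the window $s\in[c'd^2,d^2]$ giving $G_{d^2}(v_0,v)=\bigOm{1}$. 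Your proposal to handle the finite grid by a reflection coupling back to $\Z^2$ (or the torus) is precisely the device the paper uses, so the two proofs converge at that step. The trade-off is clear: you gain self-containment at the price of redoing a classical computation, while the paper buys brevity by outsourcing the heat-kernel estimates to the literature.
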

\begin{proof}
The Lemma  is proven in \cite[Theorem~2.2]{AlvesMP02}
for the infinite grid $\Z^2$. By the ``Reflection Principle''~\cite[Page 72]{Feller68},
for each walk in $\mathbb{Z}^2$ that started in $G_n$,
crossed a boundary and then crossed the boundary back to $G_n$,
there is a walk with the same probability that does not cross the boundary
and visits all the nodes in $G_n$ that were visited by the first walk.
Thus, restricting the walks to $G_n$ can only change the bound by a constant factor.
\end{proof}

\begin{lemm}
\label{lemm:props}
Consider the first $\ell$ steps of a random walk in $G_n$
which was at node $v_0$ at time $0$.
\begin{enumerate}
	\item\label{poin:dev} The probability that at any given step $1\leq i\leq \ell$
		the random walk is at distance at least $\geq \lambda\sqrt{\ell}$
		from $v_0$ is at most $2 e^{-\lambda^2/2}$.
	\item\label{poin:range} There is a constant $c_2$ such that, with probability greater than $1/2$,
		by time $\ell$ the walk has visited at least $c_2\ell/\log {\ell}$ distinct nodes in $G_n$.
\end{enumerate}
\end{lemm}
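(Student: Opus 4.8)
For part~\ref{poin:dev}, I would first transfer the problem from $G_n$ to the infinite lattice $\Z^2$, exactly as in the proof of Lemma~\ref{lemm:SRW}: by the reflection principle, confining the walk to $G_n$ can only reduce its displacement from $v_0$, so up to a constant factor it suffices to bound the tail for a free walk on $\Z^2$ started at the lift of $v_0$. On $\Z^2$ I would write the displacement after $i$ steps as $(A_i,B_i)$ and pass to the rotated coordinates $S_i=A_i+B_i$ and $\Delta_i=A_i-B_i$. A check of the four possible moves shows that at each step $S$ and $\Delta$ change by $\pm 1$ independently and uniformly, so $S_i$ and $\Delta_i$ are two \emph{independent} simple symmetric walks on $\Z$, and the $L_1$ displacement collapses to $\dist{(A_i,B_i)}=|A_i|+|B_i|=\max\{|S_i|,|\Delta_i|\}$. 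Bounding each one-dimensional tail by Hoeffding's inequality, $\prob{|S_i|\geq a}\leq 2e^{-a^2/(2i)}$ and likewise for $\Delta_i$, and combining by a union bound with $a=\lambda\sqrt{\ell}$ and $i\leq\ell$, gives a bound of the claimed form $\bigO{e^{-\lambda^2/2}}$ (the precise constant is immaterial, since in every later application $\lambda$ is chosen of order $\sqrt{\log n}$).

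For part~\ref{poin:range}, the plan is to lower bound the number $R$ of distinct visited nodes by a first-moment (rather than range-distribution) argument. Let $N_v$ be the number of the time steps $0,\dots,\ell$ spent at node $v$, so that $\sum_v N_v=\ell+1$ and $R=\card{\{v:N_v\geq 1\}}$. Applying Cauchy--Schwarz to $\sum_{v:N_v\geq1} N_v\cdot 1$ yields the reverse bound $R\geq(\ell+1)^2/\sum_v N_v^2$, so it is enough to show that $\sum_v N_v^2=\bigO{\ell\log\ell}$ with probability bounded away from $0$. Writing $\sum_v N_v^2=\sum_{s,t=0}^\ell \mathbf{1}[X_s=X_t]$ gives $\expe{\sum_v N_v^2}=(\ell+1)+2\sum_{d=1}^\ell(\ell+1-d)\,\prob{X_d=X_0}$.

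The one genuinely analytic ingredient is the return estimate $\prob{X_d=X_0}=\bigO{1/d}$ for the two-dimensional walk in the relevant range $d\leq\ell=\bigO{n}$, which is the standard local-CLT bound for $\Z^2$ carried over to $G_n$ by the reflection argument of Lemma~\ref{lemm:SRW} (for $d\leq n$ only $\bigO{1}$ reflected images of $v_0$ lie within reach, each contributing $\bigO{1/d}$). Substituting it gives $\expe{\sum_v N_v^2}\leq(\ell+1)+2(\ell+1)\sum_{d=1}^\ell\bigO{1/d}=\bigO{\ell\log\ell}$, say at most $C_0\,\ell\log\ell$. Markov's inequality then gives $\prob{\sum_v N_v^2\geq 3C_0\,\ell\log\ell}\leq 1/3$, so with probability at least $2/3>1/2$ we obtain $R\geq(\ell+1)^2/(3C_0\,\ell\log\ell)\geq c_2\,\ell/\log\ell$ with $c_2=1/(3C_0)$, as required.

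I expect part~\ref{poin:range} to be the real obstacle. Unlike the deviation bound, the trivial estimate $R\leq\ell$ is far too weak to feed into a reverse Markov inequality, so some concentration of the range is unavoidable. The reverse Cauchy--Schwarz step is exactly what sidesteps a full variance computation, reducing the whole statement to the first moment of $\sum_v N_v^2$ and hence to the $\bigO{1/d}$ return probability; establishing that return bound uniformly over the values of $d$ that actually occur, while correctly absorbing the mild boundary and self-loop effects of $G_n$, is the only delicate point.
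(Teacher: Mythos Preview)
Your argument is correct in both parts, and for part~\ref{poin:range} it differs substantively from the paper's route.

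For part~\ref{poin:dev} the paper simply observes that each coordinate of the displacement is a martingale with increments bounded by~$1$ and applies Azuma--Hoeffding directly; your $45^\circ$ rotation to two independent one-dimensional walks and the identity $|A_i|+|B_i|=\max\{|S_i|,|\Delta_i|\}$ reach the same conclusion by a marginally different path (yielding a leading constant~$4$ rather than~$2$, which, as you note, is irrelevant in all downstream uses).

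For part~\ref{poin:range} the approaches diverge. The paper argues the other way round: it first uses Lemma~\ref{lemm:SRW} to get $\expe{R_\ell}=\bigOm{\ell/\log\ell}$, then imports the variance estimate $\vari{R_\ell}=\bigTh{\ell^2/\log^4\ell}$ from~\cite{Torney86}, and concludes by Chebyshev. Your Cauchy--Schwarz inequality $R\geq(\ell+1)^2/\sum_v N_v^2$ followed by a first-moment bound on the self-intersection count and Markov's inequality is a legitimate alternative that trades the cited second-moment result on the \emph{range} for the much more elementary $\bigO{1/d}$ local-CLT return bound. What you gain is self-containment: no external variance computation is needed. What the paper's route gains is that Lemma~\ref{lemm:SRW} is already in hand, so the expectation is free and only the variance has to be quoted; it also gives slightly stronger concentration (deviation probability $\bigO{1/\log^2\ell}$ rather than the $1/3$ from Markov), though only the constant threshold $1/2$ is used afterwards. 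Either proof suffices for every later application in the paper.
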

\begin{proof}
We observe that the distance from $v_0$
in each coordinate defines a martingale with bounded difference $1$.
Then, the first property follows from  the Azuma-Hoeffding Inequality~\cite[Theorem 2.6]{MitzenmacherU05}.

As for the second property, let $R_\ell$ be the set of nodes reached by the walk in $\ell$ steps.
By Lemma~\ref{lemm:SRW}, $\expe{R_\ell}=\bigOm{{\ell}/{\log \ell}}$ (even when $v_0$ is near a boundary),
while $\vari{R_\ell}=\bigTh{{\ell^2}/{\log^4 \ell}}$ (see~\cite{Torney86}). The result follows by
applying Chebyshev's inequality.
\end{proof}

\section{The \bs} \label{sec:bs}
In the following two subsections we derive upper and lower bounds on $\bt$,
the first time at which all agents are informed of the single rumor.

\subsection{An upper bound on the broadcasting time}
Since the cover time of $G_n$ is $\bigO{n\log^2 n}$ (see \cite{Zuckerman92, ChandraRRST97}),
we can easily achieve an  $\bigOt{n/\sqrt{m}}$ upper bound on $\bt$, with high probability, when $m$ is 
polylogarithmic in $n$. Therefore, in what follows we focus on the case $m=\bigOm{\log^3 n}$.

To ease the analysis, it is convenient to envision the spreading process as partitioned into 
three successive phases (see Figure~\ref{figu:BroadcastingScenario};
the value $\ell_1$ and the constants $q, q'$ will be set by the analysis):
\begin{description}
	\item[Phase~I] (\emph{Initial diffusion})
		The source agent informs a set $A_1$ of at least $q \log^2 n$ agents,
		all with origins in a square $B_1$ of side length $8\ell_1 \log^{3/2} n$.
	\item[Phase~II] (\emph{Covering $B_1$})
		All agents originally placed in $B_1$ are informed.
	\item[Phase~III] (\emph{Covering $G_n$})
		All remaining uninformed agents are informed.
\end{description}

\begin{figure}[h]
\centering
\includegraphics[width=0.5\textwidth]{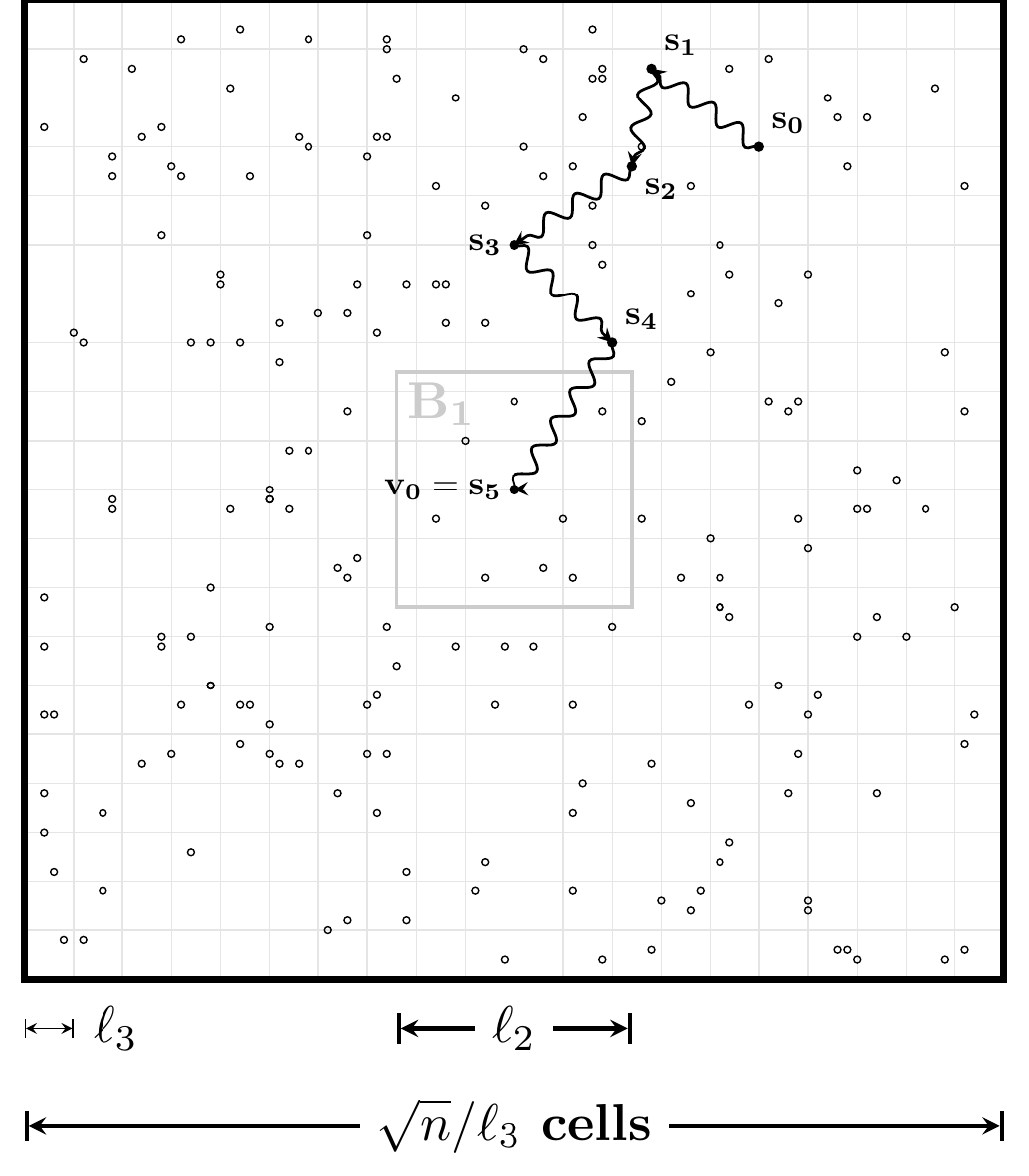}
\caption{\bs.
	In the figure, at the beginning the source walks for $5$ intervals of length $\ell_1^2$
	before reaching point $v_0$ at time $\bar{t} = 5\ell_1^2$.
	We imagine that in the interval starting at $\bar{t}$ the source visits $\bigOm{\ell_1^2/\log \ell_1}$
	distinct nodes, each inside the square $B_1$ of side $\ell_2$.
	The figure also shows the tessellation with side $\ell_3$ adopted when analyzing the covering of $G_n$.}
\label{figu:BroadcastingScenario}
\end{figure}

The following lemma bounds from above the completion time of Phase~I.
\begin{lemm}
\label{lemm:FirstPhase}
Let $\ell_1 = \sqrt{{(4 q \log^3 n})/({c_2 \dens{})}}$,
for constants $c_2>0$ (defined in Lemma~\ref{lemm:props})
and $q > 0$.
Let $T_1 = 3 \ell_1^2 \log n$. With high probability,  at time $T_1$ 
there is a set $A_1$ of informed agents  such that:
\begin{enumerate}
	\item $\card{A_1} \geq q \log^2 n$;
	\item for each pair of agents $a_1,a_2 \in A_1$,
		their initial positions at $t=0$
		are within distance $8 \ell_1 \log^{3/2} n$.
\end{enumerate}
\end{lemm}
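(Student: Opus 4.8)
The plan is to exploit the Frog-Model dynamics: since uninformed agents remain stationary at their origins until hit, the agents informed by the source are exactly those whose origin is a node the source visits, and their number is governed by the random placement, which is independent of the source's trajectory. The displacement and range bounds of Lemma~\ref{lemm:props} then control, respectively, the spatial spread (property~2) and the number (property~1) of these agents.

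First I would partition $[0,T_1]$ into $3\log n$ consecutive sub-intervals of length $\ell_1^2$ each, and call a sub-interval \emph{productive} if during it the source visits at least $c_2\ell_1^2/\log(\ell_1^2)\geq c_2\ell_1^2/(2\log n)$ distinct nodes. By part~\ref{poin:range} of Lemma~\ref{lemm:props} each sub-interval is productive with probability exceeding $1/2$, regardless of the source's position at its start. Conditioning on the sequence of sub-interval starting positions and using the Markov property, the productivity indicators become conditionally independent, each failing with probability at most $1/2$; hence the probability that no sub-interval is productive is at most $(1/2)^{3\log n}$, which is polynomially small, and with high probability at least one productive sub-interval exists.

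Next I would localize the walk. Choosing $\lambda=4\sqrt{\log n}$ in part~\ref{poin:dev} of Lemma~\ref{lemm:props} and taking a union bound over all $T_1=\bigO{n\log n}$ steps (recall $m=\bigOm{\log^3 n}$), with high probability the source never strays farther than $\lambda\ell_1=4\ell_1\sqrt{\log n}$ from the position it held at the start of the current sub-interval. Let $S$ be the set of distinct nodes visited during the first productive sub-interval, so $\card{S}\geq c_2\ell_1^2/(2\log n)$ and, on the localization event, $S$ lies in a square of side at most $8\ell_1\sqrt{\log n}\leq 8\ell_1\log^{3/2}n$. Define $A_1$ as the set of all agents whose origin lies in $S$. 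Because uninformed agents are static, the source informs every agent still sitting on a node of $S$ when it visits it, while any agent that had already departed was informed even earlier; thus every member of $A_1$ is informed by time $T_1$ and all their origins lie in the square, giving property~2.

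Finally I would lower-bound $\card{A_1}$. As the non-source agents are placed independently of the source's walk, I would condition on the walk (hence on $S$) and note that, in the binomial model, the number of agents with origin in $S$ is a binomial variable with mean $\card{S}\dens{}\geq 2q\log^2 n$, after substituting $\ell_1^2=(4q\log^3 n)/(c_2\dens{})$. A Chernoff bound bounds the probability that this count falls below $q\log^2 n$ by $e^{-q\log^2 n/4}$, uniformly over every admissible realization of $S$; a final union bound with the two events above yields property~1. The step I expect to be most delicate is orchestrating these three events simultaneously --- some sub-interval being productive, the walk staying local throughout, and the random set $S$ carrying enough agents --- while keeping $S$ independent of the placement so that the conditional Chernoff bound applies. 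A secondary subtlety is the Frog-Model bookkeeping: one must count agents by their static origins, not by their positions at the moment of infection, which is precisely what simultaneously secures the cardinality bound and the spatial confinement.
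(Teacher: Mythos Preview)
Your argument is essentially the paper's: partition $[0,T_1]$ into $3\log n$ intervals of length $\ell_1^2$, use part~\ref{poin:range} of Lemma~\ref{lemm:props} to guarantee an interval with $\Omega(\ell_1^2/\log n)$ distinct visited nodes, use part~\ref{poin:dev} for spatial confinement, and apply Chernoff to the binomial placement on the visited set. The only slip is your phrase ``conditioning on the sequence of sub-interval starting positions \ldots\ the productivity indicators become conditionally independent'': conditioning on \emph{all} starting positions simultaneously also fixes each sub-interval's \emph{endpoint}, turning the walks into bridges for which the $1/2$ bound of Lemma~\ref{lemm:props} need not hold. The fix is the standard sequential argument you surely intend --- by the Markov property, $\prob{E_k\mid E_1,\dots,E_{k-1}}\le 1/2$ for each non-productivity event $E_k$, giving $\prob{\bigcap_k E_k}\le (1/2)^{3\log n}$ --- which is exactly how the paper (implicitly) reasons as well.
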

\begin{proof}
Partition the $T_1$ time steps into
$3 \log n$ disjoint intervals of length $\ell_1^2$,
and consider the path of the source agent during one such interval.
Let $v_0$ be the location of the agent at the beginning of the interval.

By Lemma~\ref{lemm:props}, with probability $2e^{-\log^3 n} \leq 1/n^2$
all the nodes visited by the walk are within distance $4 \ell_1 \log^{3/2} n$ from $v_0$,
and thus within distance $8 \ell_1 \log^{3/2} n$ of one  other.
Moreover, the same lemma implies that
with probability $\geq 1/2$ at least $c_2 \ell^2_1/\log \ell_1$ distinct nodes are visited.

Conditioning on visiting these many  nodes in the interval,
the expected number of agents informed by the walk
is $c_2 ({\ell^2_1}/{\log \ell_1})\dens{} \geq 2q\log^2 n$.
Applying the Chernoff bound, with probability at least $1-1/n$,
the number of agents informed by the path is at least $q\log^2 n$.

Thus, with probability at least $1/2-2/n$ one segment of the path satisfies the claim,
hence the probability that the claim  holds for one of the $3 \log n$ segments is at least  $1-1/n$ for $n$ large enough.
\end{proof}

Consider now an arbitrary square area $B_1$
with side $\ell_2 = 8 \ell_1 \log^{3/2} n$,
containing all  initial locations of the agents  of the set $A_1$ of Lemma~\ref{lemm:FirstPhase}.
Next lemma bounds the time taken to complete Phase~II.
\begin{lemm}
\label{lemm:SecondPhase}
Let $T_2 = 4 \ell_2^2$.
With high probability, every agent initially located in  $B_1$ at $t = 0$
is informed by time $\tau_2 = T_1 + T_2$.
\end{lemm}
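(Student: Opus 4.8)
The plan is to exploit the defining feature of the \bs: uninformed agents are \emph{static}, so an agent sitting at a node $x_a$ remains there until some informed walk visits $x_a$. Informing every agent originally in $B_1$ therefore reduces to a \emph{hitting} problem, and I would lower bound the true process by the simplest possible mechanism: I will show that, with high probability, every node of $B_1$ is visited by at least one of the $\geq q\log^2 n$ informed walks of $A_1$ within $[T_1,\tau_2]$. Since an agent that becomes informed stays informed and keeps walking, the event ``some $A_1$ walk visits node $x_a$ by time $\tau_2$'' implies that every agent originally at $x_a$ is informed by $\tau_2$ (it was either informed earlier, or still sitting at $x_a$ and informed upon the visit). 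Thus it suffices to bound the probability of this hitting event and union bound over the target nodes (nodes already hosting an $A_1$ agent are trivially informed).

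The enabling idea is an anchoring that avoids any bookkeeping on where the $A_1$ agents have drifted by time $T_1$. Fix $b\in A_1$, let $t_b\leq T_1$ be its informing time and $x_b\in B_1$ its initial (still current) position at $t_b$, guaranteed to lie in $B_1$ by Lemma~\ref{lemm:FirstPhase}. I would analyze $b$'s walk over the \emph{entire} window $[t_b,\tau_2]$, which lasts $\tau_2-t_b\geq\tau_2-T_1=T_2=4\ell_2^2$ steps. Because any two nodes of a square of side $\ell_2$ are at Manhattan distance at most $2\ell_2$, for any target $x_a\in B_1$ we have $(\dist{x_a-x_b})^2\leq 4\ell_2^2=T_2\leq \tau_2-t_b$. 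Lemma~\ref{lemm:SRW} (applied with $x_a\neq x_b$) then shows that $b$'s walk reaches $x_a$ within $(\dist{x_a-x_b})^2$ steps, hence by time $\tau_2$, with probability at least $c_1/\log(2\ell_2)=\bigOm{1/\log n}$, using that $\ell_2=\bigO{\sqrtn\,\mathrm{polylog}\,n}$ and so $\log\ell_2=\bigO{\log n}$.

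Next I would amplify this constant-per-walk probability over all of $A_1$. Conditioned on the outcome of Phase~I (the identity of $A_1$ and the times $t_b$), the continuations of the $A_1$ agents' walks after activation are mutually independent simple random walks: each agent carries its own independent walk sequence, and $t_b$ is fixed solely by the source's trajectory reaching the static node $x_b$, hence is independent of $b$'s own future walk. The hitting events for a fixed $x_a$ are therefore independent, so that
\[
	\prob{\text{no }A_1\text{ walk reaches }x_a\text{ by }\tau_2}
		\leq\left(1-\frac{c_1}{\log(2\ell_2)}\right)^{q\log^2 n}
		\leq n^{-\bigOm{q}}.
\]
A union bound over the at most $n$ nodes of $B_1$ then caps the failure probability by $n^{1-\bigOm{q}}$, which is at most $1/n^c$ for any desired constant $c$ once $q$ is taken large enough, giving the claim.

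The one genuinely delicate point is the independence justification in the third paragraph: one must argue with care that conditioning on the event of Lemma~\ref{lemm:FirstPhase} does not secretly correlate the continuations of the $A_1$ walks (precisely the kind of step where informal arguments in the literature have gone wrong). The anchoring of the second paragraph is what makes everything else routine — by restarting the accounting of each walk at its \emph{initial} position $x_b\in B_1$ rather than at its displaced location at time $T_1$, the budget $T_2=4\ell_2^2=(2\ell_2)^2$ exactly matches the squared diameter of $B_1$, so Lemma~\ref{lemm:SRW} applies directly and no deviation estimate (Lemma~\ref{lemm:props}) on the agents' wandering is needed.
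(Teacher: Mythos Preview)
Your argument is correct and is essentially the paper's own proof: reduce to showing every node of $B_1$ is hit by some $A_1$ walk, apply Lemma~\ref{lemm:SRW} with the $2\ell_2$ diameter and $T_2=(2\ell_2)^2$ budget, multiply the $(1-c_1/\log(2\ell_2))$ miss probabilities over the $\geq q\log^2 n$ independent walks, and union bound over the nodes of $B_1$. Your explicit anchoring of each walk at $(t_b,x_b)$ and your discussion of why the post-activation walks remain independent after conditioning on Phase~I are points the paper leaves implicit, but the structure and the ingredients are identical.
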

\begin{proof}
We show that every node  in $B_1$ is  visited by at least one agent in $A_1$ before time $T_2$.

There are $\ell_2^2$ nodes in $B_1$, and the diameter of  $B_1$ is at most $2\ell_2$.
Applying Lemma~\ref{lemm:SRW} and the fact that the $\card{A_1}$ walks are independent,
the probability that any node in $B_1$
is not visited by some agent in $A_1$ until  time $T_2$ is at most 
\[
	\ell_2^2 \left(1 - \frac{c_1}{\log (2\ell_2)}\right)^{\card{A_1}}
		\leq \exp\left(-\frac{c_1 q \log^2 n}{\log (2\ell_2)} +2 \log \ell_2\right)
		\leq \frac{1}{n},
\]
by selecting a suitably large value for the positive constant $q$.
\end{proof}

To analyze the completion time of Phase~III, we consider a tessellation of $G_n$ into cells
of side $\ell_3 = \sqrt{{q' \log^3 n}/{\dens{}}}$,
for some suitable constant $q' > 0$, such that
one of the cells is entirely contained in $B_1$.
We show that once all agents placed in one cell are informed,
then with high probability all  agents placed in the adjacent cells
are informed within $20\ell_3^2$ steps.
\begin{lemm}
\label{lemm:ThirdPhase}
Consider a cell $C$, and let $\tau$ be the first time when all the agents originating in $C$ are informed.
With probability greater than $1-1/n^2$, all  agents originating  in the (at most)
$4$ cells adjacent to $C$ are informed at time $\tau+20\ell^2_3$.
\end{lemm}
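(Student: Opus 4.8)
The plan is to turn the claim into a covering statement. Since an uninformed agent is stationary, it suffices to show that every node of the (at most) four cells adjacent to $C$ that still hosts an uninformed agent at time $\tau$ is visited by \emph{some} informed agent during the window $[\tau,\tau+20\ell_3^2]$. First I would fix the supply of informed agents: the number of agents originating in $C$ is the value of a $B(m,\ell_3^2/n)$ variable, so a Chernoff bound (in the spirit of Lemma~\ref{lemm:ConcAgents}) gives that it is $\bigTh{\ell_3^2\dens{}}=\bigTh{\log^3 n}$ with high probability. By the definition of $\tau$ all of these agents are informed, and after time $\tau$ they continue as independent random walks.

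The engine of the proof is then a hitting-time estimate. Fix a target node $v$ in an adjacent cell. Any node $w$ of $C$ satisfies $\dist{v-w}\leq 3\ell_3\leq\sqrt{20}\,\ell_3$, so if an informed agent sits at such a $w$ at time $\tau$, then by the strong Markov property its trajectory after $\tau$ is a fresh walk started at $w$, and Lemma~\ref{lemm:SRW} gives that it visits $v$ within $(\dist{v-w})^2\leq 20\ell_3^2$ steps with probability at least $c_1/\log(\sqrt{20}\,\ell_3)=\bigOm{1/\log n}$. Consequently, if $\bigOm{\log^2 n}$ informed agents lie within distance $\sqrt{20}\,\ell_3$ of $v$ at time $\tau$, then by independence of these walks the probability that none of them reaches $v$ in time is at most $\bigl(1-\bigOm{1/\log n}\bigr)^{\bigOm{\log^2 n}}=\exp(-\bigOm{\log n})$, which can be driven below $1/(n^2\log^3 n)$ by taking the constant $q'$ large enough. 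A union bound over the $\leq 4\ell_3^2=\bigO{\log^3 n}$ target nodes then yields the stated probability $1-1/n^2$.

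This reduces everything to the one delicate point, which I expect to be the main obstacle: proving that at the random time $\tau$ enough informed agents — say $\bigOm{\log^2 n}$ — are located within distance $\sqrt{20}\,\ell_3$ of each target, i.e.\ within $\bigO{\ell_3}$ of $C$. The difficulty is genuine, because $\tau$ is a stopping time determined by the walks themselves, and an agent informed long before $\tau$ may have drifted far from $C$ (the ``straggler'' configuration, in which one node of $C$ is informed only at $\tau$ while the other agents of $C$ have already dispersed). To control this I would use the deviation estimate of Lemma~\ref{lemm:props}(\ref{poin:dev}): an agent informed at time $s$ is at its origin in $C$ at that instant (being stationary until then), so at time $\tau$ its displacement is that of a walk of length $\tau-s$, and it stays within $\bigO{\ell_3}$ of $C$ with high probability as long as $\tau-s=\bigO{\ell_3^2}$. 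The crux is therefore to show that a polylogarithmic number of the agents of $C$ are informed within the last $\bigTh{\ell_3^2}$ steps before $\tau$; here I would exploit the covering estimate above in reverse, arguing that once even a single informed agent is present in $C$, the $\bigTh{\log^3 n}$ agents of $C$ are all informed within $\bigO{\ell_3^2}$ steps with high probability, so they cannot all have dispersed by time $\tau$. Making this self-referential localization rigorous — carefully decoupling the position estimate from the stopping rule defining $\tau$ — is where the real work lies; the covering computation and the union bound are then routine.
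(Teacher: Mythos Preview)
Your covering computation and union bound are exactly what the paper does, but the ``delicate point'' you isolate is not actually there: the paper sidesteps the localization issue entirely by a change of reference time that you have overlooked.

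Here is the trick. For an agent $a$ originating in $C$, let $s_a\le\tau$ be the instant at which $a$ is informed. Until $s_a$ the agent is stationary at its origin in $C$; from $s_a$ on it performs a fresh random walk, and since $s_a$ is determined by the placements and by the walks of \emph{other} agents, this walk is independent of $s_a$ and of the walks of the other agents of $C$. Now fix a target node $v$ in a neighbouring cell. The origin of $a$ lies at $L_1$ distance at most $3\ell_3$ from $v$, so Lemma~\ref{lemm:SRW} gives
\[
\prob{a\text{ visits }v\text{ within its first }20\ell_3^2\text{ steps}}\ \ge\ \frac{c_1}{\log(3\ell_3)}.
\]
These events are independent over the $\ge q'\log^3 n/2$ agents of $C$, yielding the bound $(1-c_1/\log(3\ell_3))^{q'\log^3 n/2}$ for the probability that none of them succeeds. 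Finally, if $a$ visits $v$ within its first $20\ell_3^2$ steps, this happens at absolute time $s_a+t\le\tau+20\ell_3^2$. Hence every agent sitting at $v$ is informed by time $\tau+20\ell_3^2$, with no need to know where the informed agents are at time $\tau$.

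So the ``straggler'' difficulty you anticipate (agents of $C$ dispersing before $\tau$) is an artefact of insisting on restarting all walks at time $\tau$. The paper's proof instead lets each walk run from its own activation time and uses only $s_a\le\tau$; the distance bound $3\ell_3$ is then exact because it is measured from the origin. Your bootstrapping programme---showing that many agents of $C$ were informed in the last $\bigTh{\ell_3^2}$ steps before $\tau$ and hence have not drifted far---may be salvageable, but it is harder than what is needed (your ``covering estimate in reverse'' from a single explorer does not give high probability in one interval of length $\bigO{\ell_3^2}$; you would need something like the paper's Phase~I analysis first), and the paper's one-line shortcut makes it unnecessary.
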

\begin{proof}
By the Chernoff bound, at least $\ell^2_3 \dens{}/2 = {q'}\log^3 n / 2$  agents
originate from a cell $C$, with probability at least $1-1/2n^2$.
The maximum distance between a node in $C$ and a node in an adjacent cell
is bounded by $3\ell_3$.
Thus, the probability that any node in the adjacent cells
is not visited by one of the agents placed in $C$ within $20\ell^2_3$ steps
after time $\tau$ is bounded by
\[
	4\ell^2_3 \left(1 - \frac{c_1}{\log (3\ell_3)}\right)^{q'\log^3 n/2}
		\leq \exp\left(-\frac{c_1 q' \log^2 n}{2 \log (3\ell_3)} + 2 \log 2\ell_3\right)
		\leq \frac{1}{n^3},
\]
by selecting a suitably large value for the  constant $q'$.
\end{proof}

We are now ready to prove the main result of the subsection:
\begin{theo}
\label{theo:UBSpreadingTime}
With high probability,
\[
	\bt = \bigOt{\frac{n}{\sqrt{m}}}.
\]
\end{theo}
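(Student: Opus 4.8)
The plan is to assemble the three phase lemmas into a single time bound and observe that Phase~III dominates. First I would dispose of the regime $m = \bigO{\mathrm{polylog}(n)}$ exactly as the text indicates: since the cover time of $G_n$ is $\bigO{n\log^2 n}$, the source alone informs everyone within $\bigO{n\log^2 n} = \bigOt{n} = \bigOt{n/\sqrt{m}}$ steps when $\sqrt{m}$ is polylogarithmic. So I restrict to $m = \bigOm{\log^3 n}$, which guarantees $\dens{} \geq (17\log n)/n$ and $\ell_3 \leq \sqrtn$, so Lemma~\ref{lemm:ConcAgents} applies and the tessellation into cells of side $\ell_3$ is well defined.

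By Lemma~\ref{lemm:FirstPhase}, w.h.p.\ by time $T_1 = 3\ell_1^2\log n$ there is an informed set $A_1$ whose origins lie in a square $B_1$ of side $\ell_2 = 8\ell_1\log^{3/2} n$; by Lemma~\ref{lemm:SecondPhase}, w.h.p.\ every agent originating in $B_1$ is informed by $\tau_2 = T_1 + T_2$ with $T_2 = 4\ell_2^2$. Since $\ell_1/\ell_3$ is a constant, $\ell_2 = \bigTh{\ell_3\log^{3/2} n} \gg \ell_3$, so I can align the tessellation so that one cell $C_0$ is entirely contained in $B_1$; hence at time $\tau_2$ all agents originating in $C_0$ are informed.

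The heart of the argument is Phase~III, which I would analyze as a breadth-first wave over the grid of $n/\ell_3^2$ cells. Writing $\tau_C$ for the first time every agent originating in cell $C$ is informed, Lemma~\ref{lemm:ThirdPhase} gives $\tau_{C'} \leq \tau_C + 20\ell_3^2$ for each of the (at most four) cells $C'$ adjacent to $C$. Iterating along a shortest cell-path from $C_0$, whose length is at most the cell-grid diameter $2\sqrtn/\ell_3$, yields $\tau_C \leq \tau_2 + (2\sqrtn/\ell_3)\cdot 20\ell_3^2 = \tau_2 + 40\sqrtn\,\ell_3$ for every cell $C$, and therefore $\bt \leq \tau_2 + 40\sqrtn\,\ell_3$. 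Substituting $\ell_1^2 = \bigTh{n\log^3 n/m}$ and $\ell_3 = \bigTh{\sqrt{n\log^3 n/m}}$ gives $T_1, T_2 = \bigOt{n/m}$ while $40\sqrtn\,\ell_3 = \bigOt{n/\sqrt{m}}$; since $n/m \leq n/\sqrt{m}$, the Phase~III term dominates and $\bt = \bigOt{n/\sqrt{m}}$ in the binomial model, which transfers to the exact model by the corollary cited after Lemma~\ref{lemm:ConcAgents}.

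The main obstacle is making the wave argument's repeated use of Lemma~\ref{lemm:ThirdPhase} rigorous despite the dependencies among the random times $\tau_C$. I would handle this by a union bound over the cells: because each cell contains $\ell_3^2 \geq 1$ grid nodes there are at most $n$ cells, and each application of Lemma~\ref{lemm:ThirdPhase} fails with probability at most $1/n^2$, so the total failure probability is at most $1/n$. To justify applying the lemma at the data-dependent stopping time $\tau_C$, I would invoke the strong Markov property: conditioned on the agents' positions at $\tau_C$, their subsequent steps form fresh random walks, to which the hitting-probability estimate of Lemma~\ref{lemm:SRW} (the engine of Lemma~\ref{lemm:ThirdPhase}) applies uniformly in the starting configuration. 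On the complementary event, of probability $1 - \bigO{1/n}$, all three phases succeed simultaneously and the claimed bound holds.
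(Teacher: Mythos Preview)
Your proposal is correct and follows essentially the same three-phase decomposition as the paper, invoking Lemmas~\ref{lemm:FirstPhase}--\ref{lemm:ThirdPhase} in the same way and arriving at the same bound $T_1+T_2+40\sqrt{n}\,\ell_3$. Your treatment is in fact slightly more explicit than the paper's, which simply asserts the union bound over cells and does not spell out the strong Markov property or the transfer from the binomial to the exact model.
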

\begin{proof}
The case $m=\bigO{\log^3n}$ immediately follows from the observation made at the beginning of the subsection.
Consider now the case $m=\bigOm{\log^3n}$.

With probability $1-1/n$, Phase~I terminates in $\ell^2_1 \log n$ steps (Lemma~\ref{lemm:FirstPhase})
and with probability  $1-1/n$, Phase~II terminates in $4\ell^2_2$ steps (Lemma~\ref{lemm:SecondPhase}).

In Phase~III, with probability $1-1/n^2$, all agents in a given cell are informed no later than $20\ell^2_3$ steps
after all  agents in one of its adjacent cells are informed.
Since the ``cell distance'' (number of cells) between any two cells is at most  $2\sqrt{n}/\ell_3$,
Phase~III terminates in ${(2\sqrt{n}}/{\ell_3}) 20\ell^2_3 = 40\ell_3\sqrt{n}$ steps
with probability at least $1-1/n$.

Putting it all together, we obtain that, with probability at least $1-\bigO{1/n}$,
\[
	\bt
		\leq 3 \ell^2_1 \log n + 4 \ell^2_2 +40\ell_3\sqrt{n}
		= \bigO{\frac{n\log^{3/2} n}{\sqrt{m}}}.\qedhere
\]
\end{proof}

\subsection{A lower bound on the broadcasting time}
\label{sect:LowerBound1}
We develop a lower bound of $\bigOmt{n / \sqrt{m}}$ to the broadcasting time.
The argument relies on the intuition that many agents are initially located
at a large distance from one another: this limits the rate at which the rumor may spread. 
We need the following definition (see also Figure~\ref{figu:Island}):
\begin{defi}[Island]
\label{defi:Island}
Let $A$ be the set of agents.
For any parameter $\gamma > 0$,
let $G_t(\gamma)$ be the graph  with vertex set $A$ 
and such that there is an edge between two vertices
iff the corresponding agents  are within distance $\gamma$ at time $t$.
The \newterm{island} of parameter $\gamma$ of an agent $a$ at time $t$,
denoted by $\isla{a}{\gamma}{t}$, is the connected component of $G_t(\gamma)$ containing $a$.
\end{defi}

\begin{figure}[h]
\centering
\includegraphics[width=0.5\textwidth]{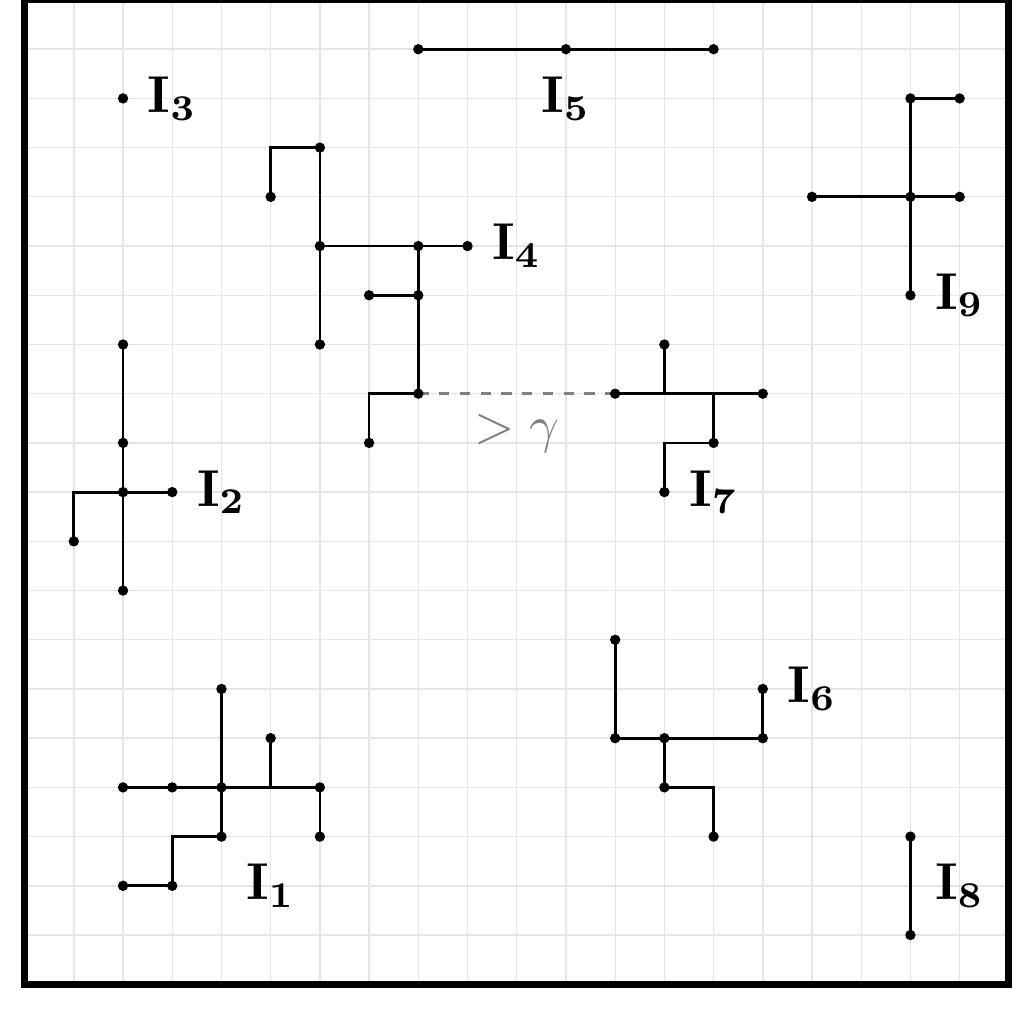}
\caption{A partition of agents into islands with $\gamma = 3$.}
\label{figu:Island}
\end{figure}

In this section, we only consider the islands at time $t=0$.
Islands at $t>0$ are used for the lower bound on the
gossiping time in the next section.
We first prove that with high probability there are no big islands.
\begin{lemm}
\label{lemm:NoBigIslands}
Let $\gamma = \sqrt{n/(4 e^3 m)}$  and
write $A= \cup_{j=1}^{s} I_j$ as the disjoint union of $s$ islands of parameter $\gamma$.
Then, with high probability, $\card{I_j} \leq \log n$, for all $1 \leq j \leq s$.
\end{lemm}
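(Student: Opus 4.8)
The guiding intuition is that $\gamma$ is calibrated so that the expected number of agents within distance $\gamma$ of a fixed node is a constant strictly below $1$: since the $L_1$-ball of radius $\gamma$ contains at most $4\gamma^2$ nodes (for $\gamma \geq 2$), this expectation is at most $4\gamma^2 \cdot (m/n) = e^{-3}$ by the choice $\gamma = \sqrt{n/(4e^3 m)}$. Hence the island of an agent behaves like a component of a \emph{subcritical} branching process, and components of size $k$ should be exponentially unlikely in $k$. The plan is therefore to fix a single agent $a$, bound $\prob{\card{\isla{a}{\gamma}{0}} \geq k}$ by a constant $<1$ raised to the power $k$, choose $k$ slightly above $\log n$, and finish with a union bound over the (at most $m \leq n$) agents, using that the largest island contains some agent whose own island attains that size.

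To bound $\prob{\card{\isla{a}{\gamma}{0}} \geq k}$ I would argue as follows. If $a$'s island has at least $k$ agents, then there is a set $S$ of exactly $k$ agents containing $a$ that is connected in $G_0(\gamma)$, and $S$ admits a spanning tree. I would union bound over: (i) the choice of the other $k-1$ agents of $S$, contributing a factor $\binom{m-1}{k-1} \leq m^{k-1}/(k-1)!$; (ii) the choice of a spanning tree on the $k$ labeled agents, contributing the Cayley factor $k^{k-2}$; and (iii) the probability that the (i.i.d.\ uniform) positions realize that specific tree.

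For factor (iii), I would fix the tree, root it at $a$, order its vertices so that every parent precedes its children, and reveal the positions in this order. When the position of a non-root vertex is revealed, its parent's position is already fixed, so the conditional probability that the two are within distance $\gamma$ is at most $4\gamma^2/n$, independently of the past; multiplying over the $k-1$ tree edges gives $(4\gamma^2/n)^{k-1}$. Combining the three factors and using $(k-1)! \geq ((k-1)/e)^{k-1}$ together with $k^{k-2}/(k-1)^{k-1} = \bigO{1/k}$ yields
\[
  \prob{\card{\isla{a}{\gamma}{0}} \geq k}
    \;\leq\; \bigO{\frac{1}{k}} \left(\frac{e \cdot m \cdot 4\gamma^2}{n}\right)^{k-1}
    \;=\; \bigO{\frac{1}{k}}\, e^{-2(k-1)},
\]
where the last equality uses $4\gamma^2 m/n = e^{-3}$, so the base equals $e \cdot e^{-3} = e^{-2} < 1$.

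Finally, taking $k = \lfloor \log n \rfloor + 1$ makes the right-hand side $\bigO{n^{-2}}$, and a union bound over the at most $m \leq n$ agents shows that with probability $1 - \bigO{1/n}$ no island has size exceeding $\log n$. The delicate point — and the step I would be most careful about — is the combinatorial bookkeeping in (i)--(ii): one must count \emph{unordered} subsets together with Cayley's spanning-tree count, so that the factorial from $\binom{m-1}{k-1}$ cancels the super-exponential growth and leaves a clean base $e^{-2}<1$. A cruder count (e.g.\ ordering the agents and counting increasing trees, which produces a bare $(k-1)!$) would leave a base too close to $1$ and fail for $k$ as large as $\log n$. The only other routine check is that the regime of interest keeps $\gamma \geq 2$, so that the $L_1$-ball bound $4\gamma^2$ holds.
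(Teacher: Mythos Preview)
Your proof is correct and follows essentially the same tree-counting argument as the paper: bound the existence of a size-$k$ island by $\binom{m}{k}$ (or, equivalently, fix an agent and use $\binom{m-1}{k-1}$ with a final union bound), multiply by Cayley's $k^{k-2}$ and by $(4\gamma^2/n)^{k-1}$, and observe that the choice of $\gamma$ forces the base to $e^{-2}$ so that $k\approx\log n$ suffices. The only cosmetic difference is that you anchor at a fixed agent before union-bounding, which costs an immaterial extra factor of $k$ but lets you spell out the conditioning in step~(iii) more carefully than the paper does.
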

\begin{proof}
Let $\B_k$ denote the event that there exists an island with at least $k$ agents.
It is easy to see that $\prob{\B_k}$ is upper bounded by the probability that $G_t(\gamma)$
contains a tree of $k$ vertices of  $A$ as a subgraph. Since $k^{k-2}$ is the number
of unrooted labeled trees on $k$ nodes, and ${4 \gamma^2}/{n}$ is (an upper bound to)
the probability that a given agent lies within distance $\gamma$ from another given agent, we have that 
\[
	\prob{\B_k}
		\leq \binom{|A|}{k} k^{k-2} \left(\frac{4 \gamma^2}{n}\right)^{k-1}
		\leq \left(\frac{e|A|}{k}\right)^k k^{k-2} \left(\frac{4 \gamma^2}{n}\right)^{k-1}.
\]
Since $|A|=m$, setting $k =1 + \log n$ and substituting the definition of $\gamma$, we obtain
\[
	\prob{\B_k} \leq \frac{e|A|}{k^2} e^{-2(k-1)}\leq \frac{e m}{k^2} \frac{1}{n^2}
		\leq \frac{1}{n}.\qedhere
\]
\end{proof}

Since the agents are distributed uniformly and independently at random,
with high probability there exists an agent placed
at distance $\bigOm{\sqrt{n}}$ from the source of the rumor:
\begin{lemm}
\label{lemm:Distance}
With probability $1-1/n^2$, at time $0$ at least one agent
is placed at $L_1$ distance $\geq \sqrt{n}/2$ from the source agent.
\end{lemm}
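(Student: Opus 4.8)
The plan is to reduce the statement to a straightforward occupancy estimate, after first establishing a purely geometric fact: no matter where the source sits, a constant fraction of the nodes of $G_n$ lie at $L_1$ distance at least $\sqrt{n}/2$ from it. Fix the source at an arbitrary node $s$ and let $F = \{v \in V_n : \dist{v-s} \geq \sqrt{n}/2\}$, calling the remaining nodes \emph{close}. The close nodes are precisely the grid nodes lying in the $L_1$-ball of radius $\sqrt{n}/2$ centered at $s$. Since the number of integer points within $L_1$ distance $r$ of any point is at most $2r^2 + 2r + 1$, taking $r = \sqrt{n}/2$ bounds the number of close nodes by $n/2 + \sqrt{n} + 1$, whence $\card{F} \geq n/2 - \sqrt{n} - 1 \geq n/3$ for $n$ large enough. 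Importantly this holds \emph{uniformly} in $s$: moving $s$ toward the boundary only pushes part of the ball outside $G_n$, which can merely shrink the set of close nodes and enlarge $F$, so the center of the grid is the worst case.

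With this in hand I would condition on the source position $s$ and use that the remaining agents are placed independently and uniformly at random among the nodes of $G_n$. Each such agent is close with probability $(n - \card{F})/n \leq 2/3$, independently of the others, so the probability that not a single one of them lands in $F$ is at most $(2/3)^{m-1}$. As this estimate is the same for every $s$, averaging over the source location yields
\[
	\prob{\mbox{no agent lies at distance} \geq \sqrt{n}/2 \mbox{ from the source}}
		\leq \left(\frac{2}{3}\right)^{m-1} \leq \frac{1}{n^2},
\]
where the last inequality holds once $m = \bigOm{\log n}$, a condition satisfied throughout the regime of interest (indeed $m \geq 17\log n$ already underlies Lemma~\ref{lemm:ConcAgents}).

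The probabilistic half of the argument is entirely routine, so the only step demanding attention is the geometric one, and in particular making the constant-fraction bound hold for \emph{every} source position rather than for a typical interior point. This is exactly why I bound the $L_1$-ball by $2r^2 + O(r)$ and retain a constant slack, so that the conclusion $\card{F} \geq n/3$ survives even at the center, where the ball is fully contained in $G_n$ and already accounts for nearly half of the nodes. Since the counting is insensitive to whether one works in the exact or the binomial model, the same bound holds in either, consistent with the reduction noted after Lemma~\ref{lemm:ConcAgents}.
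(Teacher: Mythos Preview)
Your proof is correct and follows essentially the same approach as the paper: both bound the number of nodes within $L_1$ distance $\sqrt{n}/2$ of the source by roughly $n/2$, and then use independence of the agents' placements to conclude that the probability none of them lands in the far region is negligible. Your version is a touch more careful about the geometry and requires only $m=\bigOm{\log n}$ rather than the paper's implicit $m=\bigOm{\log^3 n}$, but the argument is the same in substance.
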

\begin{proof}
Given node $v$ of a grid, there are up to $4i$ nodes at distance $i$ from $v$,
so there are up to $n/2$ nodes at distance $<\sqrt{n}/2$ from the source agent.
The expected number of agents outside this area is $n \dens{} / 2 = \bigOm{\log^3 n}$,
thus the probability that no agent is placed at distance at least $\sqrt{n}/2$
from the source agent is bounded by $1/n^2$.
\end{proof}

We are now ready to prove the lower bound on the broadcasting time:
\begin{theo}
\label{theo:LBSpreadingTime}
With high probability,
\[
	\bt = \bigOm{\frac{n}{\sqrt{m}\log^2 n}}.
\]
\end{theo}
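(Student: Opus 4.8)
The plan is to combine the two preceding lemmas with a bound on how fast the rumor can propagate across island boundaries. By Lemma~\ref{lemm:Distance}, with probability $1-1/n^2$ there is an agent $a^*$ at $L_1$ distance $\geq \sqrt{n}/2$ from the source. I want to argue that informing $a^*$ requires $\bigOm{n/(\sqrt{m}\log^2 n)}$ steps, and since $a^*$ must be informed before broadcasting completes, this lower-bounds $\bt$. The key geometric idea is to use islands of parameter $\gamma = \sqrt{n/(4e^3 m)}$: by Lemma~\ref{lemm:NoBigIslands}, with high probability every island at time $0$ contains at most $\log n$ agents. Since agents within a single island are mutually at distance $\leq (\log n)\cdot\gamma$ (the island has at most $\log n$ vertices, so its diameter in the graph $G_0(\gamma)$ is at most $\log n$ edges, each of length $\gamma$), informing one agent of an island can, in the best case, instantaneously make available the whole island, but \emph{crossing} to a new island requires some walk to traverse a gap of length exceeding $\gamma$.

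The heart of the argument is a ``potential'' or ``progress'' bound: I would track the quantity $D(t)$ equal to the maximum over informed agents of their distance from the source, and show that in any single step $D$ cannot increase by more than the distance an individual random walk can move, which is $1$ per step deterministically, but more usefully that to jump from one island to the next the rumor must wait for a walk to bridge a gap of length $\gamma$. A cleaner formulation: the rumor reaches $a^*$ only after it has passed through a chain of islands spanning distance $\geq \sqrt{n}/2$; each inter-island hop covers distance at most $\bigO{\gamma \log n}$ (the diameter of one island plus the bridging step), so the number of hops needed is $\bigOm{(\sqrt{n}/2)/(\gamma\log n)} = \bigOm{\sqrt{m}/\log n}$ using the definition of $\gamma$. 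I then need a lower bound on the time per hop: a random walk must travel distance $\gamma$ to bridge a gap, and by Lemma~\ref{lemm:props}\ref{poin:dev} a walk travels distance $\geq \gamma$ only after $\bigOm{\gamma^2/\log n}$ steps with constant probability (setting $\lambda^2 = \bigTh{\log n}$ so that the deviation probability is polynomially small makes each gap-crossing take $\bigOm{\gamma^2}$ steps except with small probability). Multiplying time-per-hop $\bigOm{\gamma^2}$ by number-of-hops $\bigOm{\sqrt{m}/\log n}$ gives $\bigOm{\gamma^2\sqrt{m}/\log n} = \bigOm{(n/m)\sqrt{m}/\log n} = \bigOm{\sqrt{n\cdot n/m}/\log n}$; tuning the logarithmic factors to absorb the union-bound losses over all $\bigOm{\sqrt{m}}$ hops should yield the stated $\bigOm{n/(\sqrt{m}\log^2 n)}$.

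Concretely I would proceed as follows. First, condition on the high-probability events of Lemma~\ref{lemm:NoBigIslands} (no island exceeds $\log n$ agents) and Lemma~\ref{lemm:Distance} (a far agent exists). Second, fix the far agent $a^*$ and consider the sequence of distinct islands through which the rumor must pass to reach it; bound below the total distance $\sqrt{n}/2$ and above the per-island contribution $\bigO{\gamma\log n}$ to get a lower bound $h$ on the number of island-crossings. Third, for each crossing, define the event that \emph{some} walk bridges the corresponding gap of length $>\gamma$ within $T = c\,\gamma^2/\log n$ steps for a small constant $c$; by Lemma~\ref{lemm:props}\ref{poin:dev} applied with a suitable $\lambda$, the probability that a single walk achieves displacement $\geq\gamma$ in $T$ steps is exponentially small, and a union bound over the $\tilde{m}$ walks (which is $\bigTh{m}$ by Lemma~\ref{lemm:ConcAgents}) keeps this small provided $T$ is chosen with the right polylogarithmic slack. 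Fourth, conclude that with high probability no crossing happens in fewer than $T$ steps, so $\bt \geq h\cdot T = \bigOm{n/(\sqrt{m}\log^2 n)}$.

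The main obstacle I anticipate is making the ``chain of island-crossings'' argument rigorous without an unwarranted independence assumption between successive crossings — exactly the pitfall the introduction flags in the flawed bound of \cite{WangKK08}. The islands are defined only at $t=0$, yet the rumor propagates over time as walks move, so the gaps the rumor must cross are themselves moving targets. The careful move is to lower-bound the time using displacement of individual walks rather than reasoning about the evolving island structure: regardless of how islands deform, an informed agent at distance $d$ from $a^*$ cannot inform $a^*$ until one of the two has moved a net distance comparable to the initial gap, and Lemma~\ref{lemm:props}\ref{poin:dev} bounds \emph{every} walk's displacement simultaneously via a single union bound over all walks and all time steps up to $T$. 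Framing the lower bound as ``within $T$ steps, no walk has moved far enough for the information front to have advanced past a constant fraction of the initial source-to-$a^*$ distance'' sidesteps the need for cross-hop independence entirely, and I expect this reframing — bounding the maximum displacement of the whole ensemble of walks rather than tracking sequential bridging events — to be the technically delicate but decisive step.
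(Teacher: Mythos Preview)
Your plan matches the paper's proof: invoke Lemma~\ref{lemm:Distance} for a far agent $a^*$, Lemma~\ref{lemm:NoBigIslands} to cap island sizes at $\log n$, lower-bound the number $H=\bigTh{\sqrt{n}/(\gamma\log n)}$ of inter-island crossings needed to reach $a^*$, and lower-bound each crossing time via Lemma~\ref{lemm:props}\ref{poin:dev}. The paper's argument is essentially your first two paragraphs, carried out tersely.

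Where you overcomplicate is in the last two paragraphs. In the \bs\ the uninformed agents are \emph{static}: they do not move until hit. Hence the time-$0$ islands never deform, and the chain of informers $s=a_0\to a_1\to\cdots\to a_k=a^*$ (where $a_i$'s walk is the one that first hits $a_{i+1}$, still sitting at its initial position) yields \emph{non-overlapping} time intervals $[t_i,t_{i+1}]$, during each of which a \emph{single} walk must travel from $a_i$'s initial position to $a_{i+1}$'s initial position. For each inter-island hop this distance is at least $\gamma$, so Lemma~\ref{lemm:props}\ref{poin:dev} bounds the probability that this one walk covers it in fewer than $\tau=\bigTh{\gamma^2/\log n}$ steps by $1/n^2$; a union bound over the $H-1$ hops then gives $\bt\geq (H-1)\tau$ w.h.p. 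There is no need for a union bound over all $\tilde m$ walks, and the independence worry you flag does not arise precisely because the hop intervals along the chain are disjoint. Your displacement-based ``reframing'' (bounding how far the whole ensemble can drift in time $T$) is in fact the device the paper uses for the \emph{gossiping} lower bound (Lemma~\ref{lemm:SlowFrontier}), where the targets \emph{do} move; for broadcasting it is unnecessary.
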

\begin{proof}
Let $v_0$ denote the agent placed at distance
at least $\sqrt{n}/2$, whose existence is guaranteed
with probability $1 - 1/n^2$ by Lemma~\ref{lemm:Distance}.

By setting  the parameter $\gamma$ as in Lemma~\ref{lemm:NoBigIslands}, with high probability 
no island has more than $\log n$ agents, hence
the maximum distance between two agents in the same island is at most $\gamma(\log n - 1)$.
Thus, with high probability, the rumor must traverse at least $H=\sqrt{n}/(2\gamma(\log n -1))$ islands to reach node 
$v_0$.

\sloppy{%
Hence, for $H-1$ non-overlapping time periods, the rumor must cover the inter-island distance which is at least $\gamma$.
By applying Lemma~\ref{lemm:props}, the probability that one inter-island distance be covered in at most
$\tau = (1/(24e^3))n/(m\log n)$ steps is at most $1/n^2$, therefore, with probability at least $1-1/n$
the time to spread the  rumor is at least $(H-1)\tau = \bigOm{n/(\sqrt{m}\log^2n)}$.}
\end{proof}

\section{The \gs}\label{sec:gs}

The \gs\ differs from the \bs\ in that each agent has initially a distinct rumor to spread,
and all agents perform independent random walks starting at time 0.

\subsection{The meeting probability of two random walks}
The main new ingredient in analyzing the \gs, which is also a result of independent interest, 
is a lower bound on the probability that two random walks on the grid meet within a given time interval.
The following lemma is the analogous of Lemma~\ref{lemm:SRW} for the case of two walks.
The proof, however, requires a different argument.

\begin{lemm}
\label{lemm:MeetingProbability}
Consider two independent, simple random walks $\bar{a}$ and $\bar{b}$,
starting at time $0$ at node $a_0$ and $b_0$, respectively, with $a_0\neq b_0$.
Let $a_t$ and $b_t$ be the locations of the walks at time $t$ and
let $T\geq \dist{a_0 -b_0}^2$.
Then, there exists a constant $c_3>0$ such that
\[
	P_{\bar{a},\bar{b}}(T) \triangleq \prob{\exists t\leq T~\mbox{such that}~a_t=b_t} \geq \frac{c_3}{\max\{1, \log (\dist{a_0 -b_0})\}}.
\]
\end{lemm}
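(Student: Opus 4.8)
The plan is to reduce the two-walk meeting problem to a one-walk hitting problem by studying the \emph{difference walk} $d_t = a_t - b_t$, so that a meeting $a_t = b_t$ corresponds exactly to $d_t = 0$, i.e.\ the difference walk hitting the origin. The difference of two independent simple symmetric random walks on $\Z^2$ is itself a random walk (a lazy one, since on any given step only one of the two coordinates of $d_t$ changes when the two walks move in orthogonal directions, and the steps are of a modified but still nearest-neighbor-scale type). The initial displacement is $d_0 = a_0 - b_0$ with $\dist{d_0} = \dist{a_0 - b_0}$, and we are given $T \geq \dist{d_0}^2$. If the difference walk behaved exactly like a single simple random walk on $G_n$, then Lemma~\ref{lemm:SRW} would immediately give the desired bound $c_3 / \max\{1, \log \dist{a_0 - b_0}\}$, since that lemma bounds the probability of hitting a fixed target (here the origin of the difference walk) within $(\dist{d_0})^2 \leq T$ steps from below by exactly this quantity.

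First I would make the difference-walk reduction precise and argue that $d_t$ is (after a possible time-rescaling or lazy-step correction) a nearest-neighbor random walk on a grid-like state space whose coordinates again form bounded-difference martingales, so that the hitting estimate of Lemma~\ref{lemm:SRW} — or rather its underlying infinite-grid source \cite[Theorem~2.2]{AlvesMP02} — applies up to a constant factor. The key point is that the statement of Lemma~\ref{lemm:SRW} is exactly a first-passage-to-a-point estimate of the form ``within (distance)$^2$ steps,'' and here $T \geq \dist{a_0 - b_0}^2 = \dist{d_0}^2$ gives us enough time budget for the difference walk to reach $0$ from $d_0$. The boundary of $G_n$ is handled as in the proof of Lemma~\ref{lemm:SRW}, via the reflection argument, so that restricting to the finite grid only changes the bound by a constant factor; this is absorbed into $c_3$.

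The main obstacle, which is why the authors flag that ``the proof requires a different argument,'' is that the difference walk is \emph{not} a clean simple random walk: when the two walks move along different coordinate axes, the increment of $d_t$ is a diagonal step of $L_1$-length $2$, and when they happen to take correlated or identical moves there is laziness (probability that $d_t = d_{t-1}$). So $d_t$ lives on a sublattice (the parity-even points) and has a step distribution that is uniform over a set of moves different from the four axis directions, plus a holding probability. I would need to verify that the hitting-time asymptotics of \cite{AlvesMP02} are robust to this: either by showing $d_t$ is, up to a deterministic linear change of coordinates (rotation by $45^\circ$ and rescaling), a standard symmetric walk on $\Z^2$ for which those estimates hold verbatim, or by re-deriving the local central limit / Green's function estimate directly for this step distribution. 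The holding probability merely rescales time by a constant factor and does not affect which points are eventually hit, so it too is harmless for the lower bound, but one must confirm the time budget $T \geq \dist{d_0}^2$ survives this rescaling (it does, since a constant slowdown is compatible with a lower bound that only claims $\Omega(1/\log)$ success probability).

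Finally, I would assemble the pieces: the variance/recurrence behavior of the two-dimensional difference walk ensures that hitting the origin within $O(\dist{d_0}^2)$ steps has probability $\Omega(1 / \log \dist{d_0})$, matching the single-walk bound of Lemma~\ref{lemm:SRW}, and the boundary reflection argument transfers this from $\Z^2$ to $G_n$ at the cost of a constant. Absorbing all constant factors into $c_3 > 0$ yields exactly $P_{\bar a, \bar b}(T) \geq c_3 / \max\{1, \log \dist{a_0 - b_0}\}$. The case $\dist{a_0 - b_0} = 1$ (where $\max\{1, \log \dist{a_0 - b_0}\} = 1$) is covered since from an adjacent configuration the difference walk hits $0$ with constant probability in $O(1)$ steps.
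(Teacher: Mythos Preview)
Your difference-walk reduction is a natural idea, but it is not the route the paper takes, and it has a concrete gap. The paper uses a second-moment argument on the \emph{expected number} of meetings rather than reducing to a single walk: setting $R(w,u,s)=\sum_{t\le s}\sum_x P_t(w,x)P_t(u,x)$ and decomposing over the first meeting time gives $R(a_0,b_0,T_0)\le P_{\bar{a},\bar{b}}(T_0)\cdot\max_x R(x,x,T_0)$ with $T_0=\dist{a_0-b_0}^2$. The local central limit theorem (Lawler, Theorem~1.2.1) then yields $R(a_0,b_0,T_0)=\Omega(1)$ for the numerator and $\max_x R(x,x,T_0)=O(\log T_0)$ for the denominator. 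This method never needs to identify the law of $d_t$ or to argue that it is Markov; it works directly with the one-walk transition kernels $P_t(\cdot,\cdot)$.

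The gap in your approach is a reducibility obstruction that laziness does \emph{not} cure. Every increment $\xi_a-\xi_b$ of the difference walk has coordinate sum in $\{-2,0,2\}$, so $d_t$ is permanently confined to the coset of the even sublattice containing $d_0$; when $\dist{a_0-b_0}$ is odd, the origin is not in that coset and $d_t$ never hits $0$ on $\Z^2$, no matter how you rotate or rescale time. (Your remark that the holding probability ``does not affect which points are eventually hit'' is precisely the problem: the reachable set itself excludes $0$.) On $G_n$ the boundary self-loops do eventually break this constraint, but then $d_t=a_t-b_t$ ceases to be a Markov chain---its transition depends on whether $a_t$ or $b_t$ individually lies on the boundary, not merely on their difference---so neither Lemma~\ref{lemm:SRW} nor the single-walk reflection principle transfers. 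To salvage the difference-walk route you would need a separate argument for odd initial distance (e.g., first reach distance $2$ with constant probability, or force a boundary self-loop within the time budget), none of which is in the proposal.
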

\begin{proof}
The case $\dist{a_0-b_0}=1$ is immediate. Consider now the case $\dist{a_0-b_0}>1$.
Let $P_t (w, x)$ denote he probability that a walk that started at node $w$ at time 0 is at node $x$ at time $t$, and 
let $R(w,u,s)$ be the expected number of times that two walks
which started at nodes $w$ and $u$ at time 0 meet during the time interval $[0,s]$, then
\[
	R(w,u,s)=\sum_{t=0}^s \sum_x P_t (w, x)P_t (u,x).
\]
Let $\tau (a,b)$ be the first meeting time of the walks $\bar{a}$ and $\bar{b}$. Then
\[
	R(a_0,b_0,T) = \sum_{t=0}^T \prob{\tau (a,b)=t}R(x_t,x_t,T-t) \leq P_{\bar{a},\bar{b}}(T) \max_x R(x,x,T).
\]
Thus, setting $T_0 = \dist{a_0 -b_0}^2$, we have
\[
	P_{\bar{a},\bar{b}}(T) \geq P_{\bar{a},\bar{b}}(T_0) \geq \frac{R(a_0,b_0,T_0)}{\max_x R(x,x,T_0)}.
\]

Let $D(a_0,b_0)$ be the set of nodes with distance up to $\dist{a_0-b_0}$ from both $a_0$ and $b_0$, i.e.,
\[
	D(a_0,b_0)=\{x~|~\dist{x-a_0}\leq \dist{a_0-b_0}~\mbox{and}~\dist{x-b_0}\leq \dist{a_0-b_0}\}.
\]
It is easy to verify that $\card{D(a_0,b_0)} \geq \frac{1}{4} \dist{a_0-b_0}^2$.
Applying Theorem 1.2.1 in \cite{Lawler91} we have:
\begin{eqnarray*}
R(a_0,b_0,T_0)
	&\geq &\sum_{t=0}^{T_0} \sum_{x\in D(a_0,b_0)} P_t (a_0, x)P_t (b_0,x)\\
	&\geq &\sum_{t=\frac{T_0}{2}+1}^{T_0} \sum_{x\in D(a_0,b_0)} 4\left(\frac{1}{\pi t}\right)^2e^{-\frac{\dist{x-a_0}^2 + \dist{x-b_0}^2}{t}}.
\end{eqnarray*}
By bounding $\dist{x-a_0}^2$ and  $\dist{x-b_0}^2$ from above with $T_0$ in the  formula, easy calculations show  that 
$R(a_0,b_0,T_0) = \bigOm{1}$.
Similarly, using the fact that there are no more than $4i$ nodes at distance exactly 
$i$ from $x$, we have:
\begin{eqnarray*}
\max_x R(x,x,T) &\leq& 1 + \sum_{t=1}^T \sum_{i = 1}^{t} 4i\, 4\left(\frac{1}{\pi t}\right)^22 e^{-\frac{i^2}{t}} \\
&\leq & 1 + \left(\frac{4}{\pi}\right)^2 \sum_{t=1}^T \frac{1}{t^2} \left(\left(\sum_{i=1}^{\sqrt{t}} i\right) + \left(\sum_{i=1+\sqrt{t}}^{t} i e^{-i^2/t}\right)\right)\\
&\leq & 1 + \left(\frac{4}{\pi}\right)^2 \sum_{t=1}^T \frac{1}{t^2} \left(\frac{t}{2} + \left(\sum_{i=1+\sqrt{t}}^{t} i^2 e^{-i^2/t}\right)\right)\\
&\leq & 1 + \left(\frac{4}{\pi}\right)^2 \sum_{t=1}^T \frac{1}{t^2} \left(\frac{t}{2} + \frac{e}{(e-1)^2} t \right) = \bigO{\log T}.
\end{eqnarray*}

We conclude that there is a constant $c_3>0$ such that
\[
P_{\bar{a},\bar{b}}(T) \geq \frac{R(a_0,b_0,T_0)}{\max_x R(x,x,T_0)}\geq \frac{c_3}{\log (\|a_0 -b_0\|)}.\qedhere
\]
\end{proof}

\subsection{An upper bound on the gossiping time}
We observe that since the $L_1$ diameter of $G_n$ is $2\sqrt{n} - 2$,
we can use Lemma~\ref{lemm:MeetingProbability} to show that
with probability $1-1/n^2$, at time $8 n \log^2 n$
an agent  has met all other agents walking   in $G_n$.
Thus, the theorem trivially holds for $m$ polylogarithmic in $n$.

Consider now the  case $m=\bigOm{\log^3 n}$.
As in  the derivation of the upper bound to the Broadcasting time,
we resort to the binomial model introduced in Section~\ref{sec:prelim}.
Let $\tilde{m}$ be the number of agents in a given instance of the process and recall that $\tilde{m}=\bigTh{m}$ w.h.p.

We prove the theorem by bounding from above the spreading time of any fixed rumor.
Specifically, let $\bt^i $ be the time for spreading of the  $i$-th rumor.
We show that  there is a constant $c_4>0$ such that, for $1 \leq i \leq \tilde{m}$,
\[
	\prob{\bt^i \geq \frac{c_4 n\log^{3/2} n}{\sqrt{{m}}}} \leq \frac{1}{n^2}.
\]
The argument proceeds as follows. We tessellate $G_n$ 
into cells of suitable side $\ell_1$ (defined in Lemma~\ref{lemm:FirstPhase2}) and  say  that a cell $Q$ is \newterm{reached}
at time $t_Q$ if $t_Q$ is the first time when a node of the cell
hosts an  agent informed of the rumor.  We call this first visitor the \newterm{explorer} of $Q$, and 
let $A_Q$ denote the set of agents inside $Q$ at time $t_Q$.

We first show that, after a suitably chosen number of steps after  $t_Q$, cell $Q$ is \newterm{conquered},
in the sense that a large number of agents of $A_Q$ have been informed.
Also, we show that by the time  cell $Q$ is conquered, each of its neighboring cells has been reached,
and that a large number of informed agents are within a short distance from $Q$.
These facts imply that the conquering process proceeds smoothly. Finally, we prove that
all agents are informed of the rumor shortly after all cells are conquered.
This argument is made rigorous in the following sequence of lemmas.
\begin{lemm}
\label{lemm:FirstPhase2}
Let $\ell_1 = \sqrt{{4 qn \log^3 n}/{(c_3 m){}}}$, 
where $q > 0$ and  $c_3$ is defined in Lemma~\ref{lemm:MeetingProbability}. Let also
 $T_1 = 4 \ell_1^2$. Then, for any cell $Q$ of the tessellation,
 by time $\tau_1 = t_Q + T_1$,
at least $q \log^2 n$ agents of $A_Q$ are informed  with probability $1-{1/n^6}$, for sufficiently large $n$.
\end{lemm}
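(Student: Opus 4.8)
The plan is to show that the single explorer of the cell $Q$, walking for $T_1 = 4\ell_1^2$ steps after $t_Q$, collides with at least $q\log^2 n$ of the agents residing in $Q$, so that each such agent becomes informed. The key quantitative input is the two-walk meeting bound of Lemma~\ref{lemm:MeetingProbability}: for any target agent of $A_Q$ sitting at distance at most the diameter of $Q$ from the explorer, the meeting probability within $T_1$ steps is at least $c_3/\log(\text{diam}(Q))$, provided $T_1$ exceeds the squared initial distance. I would first fix the side length so that the diameter of a cell is $O(\ell_1)$ and $T_1 = 4\ell_1^2 \geq (2\ell_1)^2$ dominates the squared intra-cell distance, which lets me invoke the meeting lemma uniformly; this gives a per-agent success probability of $\Omega(1/\log\ell_1) = \Omega(1/\log n)$ after substituting the definition of $\ell_1$.

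\medskip

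Next I would control the \emph{number} of agents available to infect. In the binomial model each cell receives $\mathrm{Binom}(\tilde m,\ell_1^2/n)$ agents, so the expected occupancy of $Q$ is $\Theta(\ell_1^2 m/n) = \Theta(q\log^3 n / c_3)$. A Chernoff--Hoeffding bound then shows that $\card{A_Q}$ is at least, say, half its expectation with failure probability at most $1/n^6$ once $q$ is large enough. I would condition on this event so that the explorer has $\Omega(\log^3 n / \log n) = \Omega(\log^2 n)$ worth of expected successful meetings to work with.

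\medskip

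With the occupancy pinned down, the expected number of agents of $A_Q$ that the explorer meets is $\card{A_Q}\cdot \Omega(1/\log n) = \Omega(q\log^2 n)$, with $q$ tunable upward. The meetings of the explorer with the distinct target agents are \emph{independent} across targets (the walks are mutually independent), so a second Chernoff bound shows that the count of successful meetings concentrates around its mean, giving at least $q\log^2 n$ informed agents with failure probability at most $1/n^6$. Taking a union bound over the two failure events (low occupancy, and too few meetings given occupancy) yields the claimed $1-1/n^6$ probability. This mirrors exactly the structure of Lemma~\ref{lemm:FirstPhase}, with the SRW visiting-bound of Lemma~\ref{lemm:SRW} replaced by the meeting-bound of Lemma~\ref{lemm:MeetingProbability}.

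\medskip

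The main obstacle I anticipate is the independence claim for the second Chernoff step. Unlike the broadcasting case—where the explorer hits \emph{static} nodes, so the set of visited distinct vertices is a function of the single explorer walk and the agents sitting there are independent of that walk—here each target is itself a \emph{moving} walk, and the event ``explorer meets target $j$'' depends on the explorer's trajectory, which is shared across all $j$. Thus the indicator variables $\{\text{explorer meets } a_j\}$ are not independent; they are only conditionally independent given the explorer's path. I would handle this by either conditioning on the explorer's trajectory and applying Lemma~\ref{lemm:MeetingProbability} to obtain a lower bound on each conditional meeting probability that holds for \emph{every} fixed explorer path (so the bound survives de-conditioning), or by working with the martingale/negative-association structure so that the Chernoff-type concentration still applies to the sum. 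Getting a clean per-path lower bound is the delicate point, since Lemma~\ref{lemm:MeetingProbability} is stated as an averaged-over-both-walks bound, not a worst-case-over-explorer-path bound.
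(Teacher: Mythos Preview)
Your approach is essentially identical to the paper's: the paper also (i) lower-bounds $\card{A_Q}$ via a Chernoff bound on the binomial occupancy, (ii) defines indicators $X_a=\mathbf{1}\{\text{explorer meets }a\text{ by }\tau_1\}$, (iii) invokes Lemma~\ref{lemm:MeetingProbability} to get $\prob{X_a=1}\ge c_3/\log(2\ell_1)$ and hence $\expe{X}\ge 2q\log^2 n$, and (iv) applies Chernoff to $X=\sum_a X_a$.

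Regarding the obstacle you flag: the paper simply \emph{asserts} that the $X_a$'s are independent and applies the Chernoff bound directly, without conditioning on the explorer's path or invoking any negative-association argument. So the subtlety you identify---that the indicators share the explorer's trajectory and are therefore only conditionally independent given that trajectory, while Lemma~\ref{lemm:MeetingProbability} gives an averaged rather than a per-path lower bound---is real, but the paper does not address it; your proposal is in fact more careful on this point than the published proof.
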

\begin{proof}
Since at any given time the agents are at random and independent locations, 
the expected number of agents in $Q$ at time $t_Q$
is $ {4 q \log^3 n}/{c_3}$.
Thus, by the Chernoff bound  at least ${4 q \log^3 n}/{(2c_3)}$
agents are inside $Q$ at time $t_Q$  with probability at least $1-1/n^6$.

For each agent $a \in A_Q$, define an indicator variable $X_a$ such that $X_a =1$ iff agent $a$
has met the explorer of $Q$ by time $\tau_1$.
The $X_a$'s are independent and
$X = \sum_{a \in A_Q} X_a$ is a lower bound to the number of informed agents in $A_Q$
by time $\tau_1$. Applying Lemma~\ref{lemm:MeetingProbability}, we have 
$\prob{X_a = 1} \geq \frac{c_3}{\log 2 \ell_1}$, hence $\expe{X} \geq  2 q \log^2 n$. 
Using the Chernoff bound, we finally obtain
$\prob{X < q \log^2 n} \leq \exp\left(- {q \log^2 n}/{4} \right) \leq {1}/{n^6}$,
for a sufficiently large $n$.
\end{proof}

\begin{lemm}
\label{lemm:SecondPhase2}
Consider a cell $Q$, and 
define $R$ to be the square of side $\ell_2 = (1+2\sqrt{2})\ell_1$ centered at $Q$
(see Figure~\ref{figu:GossipingScenario}). Let $\tau_1$ and $T_1$ 
be defined as in Lemma~\ref{lemm:FirstPhase2},  and let $T_2 = 4 \ell_2^2$.
Let $A_R \subseteq A_Q$ denote the set of agents residing inside $Q$ at time $t_Q$
and contained in $R$ at time $\tau_1$. Finally, let $\tau_2 = \tau_1 + T_2 = t_Q + T_1 + T_2$.
Then, with probability at least $1 - 1/n^5$,
\begin{enumerate}
	\item\label{poin:branch} at least $\frac{1}{2} (1-\frac{2}{e}) q \log^2 n$ informed agents
		are within distance $\ell_2$ from $Q$ at time $\tau_1$; 
	\item $\frac{1}{4} (1-\frac{2}{e}) \frac{4q \log^3 n}{c_3} \leq \card{A_R} \leq \frac{9}{4} (1-\frac{2}{e}) \frac{4q \log^3 n}{c_3}$; 
	\item\label{poin:capture} every agent $a \in A_R$ is informed by time $\tau_2$; and
	\item\label{poin:neighbor} each of the neighboring cells of $Q$ has been reached by time $\tau_2$.
\end{enumerate}
\end{lemm}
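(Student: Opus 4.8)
The plan is to establish the four parts in sequence, letting parts~(2) and~(1) supply the population and geometry estimates that feed the meeting arguments of parts~(3) and~(4). Everything rests on one deterministic-geometry input: applying Lemma~\ref{lemm:props}(\ref{poin:dev}) with $\lambda=\sqrt2$ over the $T_1=4\ell_1^2$ steps of the interval $[t_Q,\tau_1]$ (so $\sqrt{T_1}=2\ell_1$), an agent started in $Q$ leaves the concentric square $R$ of side $\ell_2=(1+2\sqrt2)\ell_1$ with probability at most $2e^{-1}=2/e$. The constant $(1+2\sqrt2)$ is chosen precisely so that a per-interval displacement below $\lambda\sqrt{T_1}=2\sqrt2\,\ell_1$ keeps an agent of $Q$ inside $R$ and within distance $\ell_2$ of $Q$. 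I would also record at the outset, by a Chernoff bound, that since $\expe{\card{A_Q}}=\ell_1^2\dens{}=4q\log^3 n/c_3$ one has $\tfrac12\cdot\frac{4q\log^3 n}{c_3}\le\card{A_Q}\le\tfrac32\cdot\frac{4q\log^3 n}{c_3}$ with probability $1-1/n^6$.

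For part~(1) I would take the $\ge q\log^2 n$ informed agents guaranteed by Lemma~\ref{lemm:FirstPhase2} (this is where one $1/n^6$ term is spent), and apply the displacement bound above to each. Because distinct agents walk independently, the indicators ``stays within $R$'' are independent across these agents, each with success probability $\ge 1-2/e$, so a further Chernoff bound leaves at least $\tfrac12(1-2/e)q\log^2 n$ of them within distance $\ell_2$ of $Q$ at time $\tau_1$. Part~(2) is the same computation run on all of $A_Q$ rather than on the informed subset: combining the two-sided concentration of $\card{A_Q}$ (factors $\tfrac12,\tfrac32$) with a second Chernoff bound on the number of agents of $A_Q$ remaining in $R$ (again factors $\tfrac12,\tfrac32$ around the per-agent probability $1-2/e$) confines $\card{A_R}$ to the stated interval $[\tfrac14(1-2/e),\tfrac94(1-2/e)]\cdot\frac{4q\log^3 n}{c_3}$.

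For part~(3) I would fix an agent $a\in A_R$ and use the $k=\tfrac12(1-2/e)q\log^2 n=\bigOm{\log^2 n}$ informed agents sitting within distance $\ell_2$ of $Q$ at time $\tau_1$. Since $a$ itself lies in $R$, each informed agent opens the window $[\tau_1,\tau_2]$ within distance $2\ell_2$ of $a$, and because $T_2=4\ell_2^2=(2\ell_2)^2\ge\dist{\cdot}^2$, Lemma~\ref{lemm:MeetingProbability} gives each such pair a meeting probability of at least $c_3/\log(2\ell_2)=\bigOm{1/\log n}$. Boosting over the $k$ informed agents (independence secured as below), the probability that $a$ meets none of them is at most $(1-c_3/\log(2\ell_2))^{k}\le e^{-\bigOm{\log n}}\le 1/n^6$ for $q$ large enough, and a union bound over the $\card{A_R}=\bigO{\log^3 n}$ agents of $A_R$ (bounded in part~(2)) informs all of them by $\tau_2$. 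Part~(4) is analogous but cleaner: for each of the at most four neighbouring cells I would fix a target node $u$ at distance at most $2\ell_2$ from the informed agents of part~(1) and apply Lemma~\ref{lemm:SRW}, giving each informed agent probability $\bigOm{1/\log n}$ of visiting $u$ within $T_2\ge\dist{\cdot}^2$ steps; here the events ``agent $i$ reaches $u$'' depend on disjoint walks and are genuinely independent, so the same boosting plus a union over four cells finishes. Summing the $\bigO{1}$ failure terms, each $\bigO{1/n^6}$ or $\bigO{\log^3 n/n^6}$, stays below $1/n^5$.

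The step I expect to be the main obstacle is the independence bookkeeping in part~(3): the $k$ meeting events for a fixed $a$ all involve $a$'s own trajectory, so they are not independent as written. The remedy is to condition on $a$'s walk, after which each meeting is a function of one of the $k$ pairwise-disjoint informed walks and the events become conditionally independent, and then to argue that Lemma~\ref{lemm:MeetingProbability}'s per-pair bound may be invoked under this conditioning. A related but milder subtlety appears in part~(1), where ``being informed'' and ``staying in $R$'' are functions of the same walk; one must verify that restricting to the informed subset does not corrupt the displacement estimate, which is again handled by conditioning on the informing history before applying Lemma~\ref{lemm:props}(\ref{poin:dev}). Once independence is secured, the meeting and hitting computations themselves are routine given Lemmas~\ref{lemm:MeetingProbability} and~\ref{lemm:SRW}.
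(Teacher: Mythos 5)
Your proposal follows essentially the same route as the paper's proof: the same $2/e$ displacement bound from Lemma~\ref{lemm:props} defining $A_R$, the same Chernoff arguments for Points~(1) and~(2), the same boosting of the $c_3/\log(2\ell_2)$ meeting probability over the $\bigOm{\log^2 n}$ informed agents in $R$ for Point~(3), and the same hitting argument via Lemma~\ref{lemm:SRW} for Point~(4), with matching union bounds. The independence subtleties you flag (the $k$ meeting events sharing $a$'s trajectory, and ``informed'' versus ``stays in $R$'' being functions of the same walk) are real, but the paper's proof silently assumes them away, so your treatment is if anything more careful than the original.
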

\begin{proof}
Observe that by Lemma~\ref{lemm:props},
the probability that an agent $a$ 
contained in $Q$ at time $t_Q$
is outside $R$ at time $\tau_1 = t_Q + T_1$ is at most $2/e$, whence
$ \prob{a \in A_R | a\in A_Q} \geq \left(1 - \frac{2}{e}\right) \triangleq r$.
Since (by Lemma~\ref{lemm:FirstPhase2}) at least $q \log^2 n$  agents of $A_Q$ are informed at time $\tau_1$, and
each of them belongs to $A_R$  independently with probability $r$, we can apply 
the Chernoff-Hoeffding bound to show that with probability $1-1/n^6$ the number of informed agents in $R$ at time $\tau_1$
is at least $I_R(\tau_1) = \frac{r}{2} q \log^2 n$.
This proves Point (1), and Point (2) follows by a similar argument.

For what concerns Point (3),
consider an agent $a \in A_R$, not informed at time $\tau_1$.
Since $a$ is within distance $2\ell_2$ from the informed nodes belonging to $A_R$,
by Lemma~\ref{lemm:MeetingProbability},
the probability that $a$ is not informed at time $\tau_2$ is at most
\[
	\left(1 - \frac{c_3}{\log 2\ell_2}\right)^{I_R(\tau_1)}
		\leq \exp\left(- \frac{c_3 I_R(\tau_1)}{\log 2\ell_2}\right)
		< \frac{1}{n^6}
\]
by selecting a suitably large constant $q$.  The point is proven by applying the union bound on
$\card{A_R} = \bigO{\log^3 n}$ agents.

Finally, for Point (4), consider one of the neighboring cells of $Q$, say $Q'$.
Since $Q' \subseteq R$, each point of $Q$ is at distance at most $2 \ell_2$
from the actual location of the $I_R(\tau_1)$ informed agents
which are inside $R$ at time $\tau_1$.
Therefore, applying Lemma~\ref{lemm:SRW},
we can conclude that the probability that $Q'$ is not reached
by time $\tau_2$ is at most
\[
		\left(1 - \frac{c_1}{\log 2\ell_2}\right)^{\card{Q'} I_R(\tau_1)}
		\leq \exp\left(-\frac{c_1}{\log 2\ell_2} \frac{r q \log^2 n}{2} \frac{4 q \log^3 n}{c_3 m/n} \right)
		< \frac{1}{n^6},
\]
for a sufficiently large $n$.
We conclude the proof by applying the union bound over the neighboring cells.
\end{proof}

\begin{figure}[h]
\centering
\includegraphics[width=0.5\textwidth]{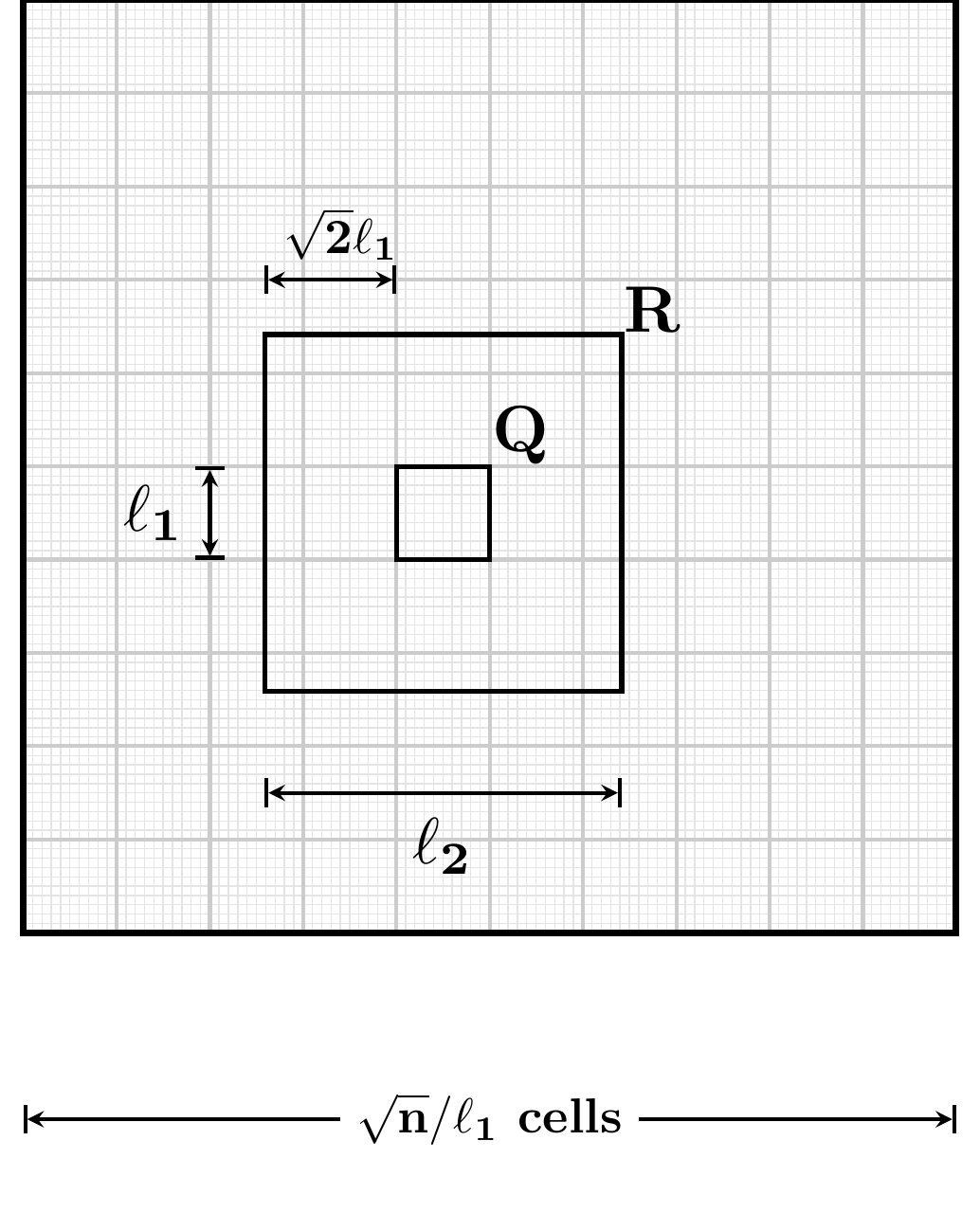}
\caption{\gs. The tessellation adopted in the analysis of the spreading of a rumor.
	A cell $Q$ and the corresponding area $R$ around it are shown.}
\label{figu:GossipingScenario}
\end{figure}

A cell $Q$ is conquered when all agents in the set $A_R$, defined in the previous lemma, are informed. 
\begin{lemm}
\label{lemm:ConqCell}
With the same notation of Lemma~\ref{lemm:SecondPhase2},
with probability $1 - 1/n^3$, for any time instant $\tau_2 \leq t \leq 8 n \log^2 n$,
there are at least $\bigOm{\log^3 n}$ informed agents
at distance at most $\ell_2$ from any node of $Q$.
\end{lemm}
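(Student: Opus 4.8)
The plan is to show that, once cell $Q$ is conquered at time $\tau_2$, the population of informed agents in the vicinity of $Q$ is \emph{self-sustaining}: although the agents informed by time $\tau_2$ will eventually diffuse away, they keep infecting the uninformed agents that wander close to $Q$, so the number of nearby informed agents never drops below $\bigOm{\log^3 n}$. Concretely, I would partition the interval $[\tau_2, 8n\log^2 n]$ into consecutive windows of length $T_2 = \bigTh{\ell_2^2}$ and prove, by induction over the window boundaries, the invariant that at each such time $t$ there are at least $c\log^3 n$ informed agents within distance $\ell_2$ of $Q$, for a suitable constant $c>0$. The base case is exactly Lemma~\ref{lemm:SecondPhase2}: at time $\tau_2$ the whole set $A_R$, of size $\card{A_R} = \bigOm{\log^3 n}$, is informed and lies within distance $\ell_2$ of $Q$.

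For the inductive step I would combine two facts. First, because the simple random walk on $G_n$ has the uniform stationary distribution and the agents start from the (binomial) uniform distribution, the marginal location of every agent is uniform at every time; hence the number of agents of any status within distance $\ell_2$ of $Q$ at a fixed time is a sum of independent indicators with mean $\bigTh{\ell_2^2 \dens{}} = \bigTh{\log^3 n}$, and a Chernoff bound shows it is $\bigTh{\log^3 n}$ with probability $1 - e^{-\bigOm{\log^3 n}}$. Second, given the invariant at the start of a window $[t, t+T_2]$, the standing reservoir of $\geq c\log^3 n$ informed agents sits within distance $\bigO{\ell_2}$ of every agent near $Q$, so by Lemma~\ref{lemm:MeetingProbability} (applied with $T_2$ chosen large enough that it exceeds the squared separation) each such agent meets an informed one, and is therefore informed, by time $t+T_2$ with probability $1 - (1 - c_3/\log(2\ell_2))^{c\log^3 n} \geq 1 - 1/n^{4}$. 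Letting $W$ be the set of agents that are within distance $\ell_2$ of $Q$ at time $t+T_2$ and were within distance $\bigO{\ell_2}$ of $Q$ at time $t$, every agent of $W$ is informed at time $t+T_2$; and since the displacement of a walk over $T_2 = \bigTh{\ell_2^2}$ steps is $\bigO{\ell_2}$ with probability bounded away from $0$ (Lemma~\ref{lemm:props}, part~\ref{poin:dev}, together with reversibility of the walk), a constant fraction of the $\bigTh{\log^3 n}$ agents near $Q$ at time $t+T_2$ belong to $W$, whence $\card{W} \geq c\log^3 n$ by a further Chernoff bound. This re-establishes the invariant at time $t+T_2$ with \emph{no} geometric loss. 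A union bound over the $\bigO{m/\log n}$ (hence polynomially many) window boundaries and over the $\leq 8n\log^2 n$ time instants used in the population estimate, each contributing failure probability $e^{-\bigOm{\log^3 n}}$, keeps the total failure probability below $1/n^3$.

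The crux, and the step I expect to be most delicate, is the dependency bookkeeping in the inductive step: the event ``agent $a$ is informed'' is correlated, through the shared history of the walks, both with which agents form the reservoir and with the positions that determine membership in $W$. The way to keep the argument clean is to condition on the walks only up to time $t$ --- which fixes a concrete reservoir of $\geq c\log^3 n$ informed agents together with their locations --- and then invoke Lemma~\ref{lemm:MeetingProbability} using only the \emph{fresh}, mutually independent increments of the walks during $[t, t+T_2]$, so that the meeting events for distinct agents combine as required. The essential point, which defeats the naive ``track the agents of $A_R$'' approach (those agents spread over an area $\bigTh{n\log^2 n}$ by the final time, far larger than $\ell_2^2$), is that the reservoir is regenerated in each window out of the freshly replenished uniform population near $Q$, rather than being carried by a fixed set of agents.
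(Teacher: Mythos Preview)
Your proposal is essentially correct and follows the same high-level architecture as the paper: slice $[\tau_2,8n\log^2 n]$ into windows of length $\bigTh{\ell_2^2}$ and argue inductively that the pool of informed agents near $Q$ regenerates in each window from the freshly replenished uniform population, rather than tracking the original set $A_R$.

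The difference is in the regeneration mechanism. The paper keeps a weaker invariant---just \emph{one} informed agent inside $Q$ at the start $t_k$ of each window---and then reruns the machinery of Lemmas~\ref{lemm:FirstPhase2} and~\ref{lemm:SecondPhase2} verbatim with $t_Q$ replaced by $t_k$: that single agent plays the role of the explorer, informs $\bigOm{\log^2 n}$ agents of the current $A_Q$ within $T_1$ steps, and those in turn saturate the current $A_R$ within a further $T_2$ steps, yielding $\bigOm{\log^3 n}$ informed agents in $Q$ at $t_{k+1}$; the ``for all $t$ in the window'' part then follows because those agents, being in $Q$ at $t_{k+1}$, stay in $R$ for the next $T_1+T_2$ steps with constant probability each (Lemma~\ref{lemm:props}). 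Your version instead carries the full $c\log^3 n$ reservoir as the invariant and uses it directly via Lemma~\ref{lemm:MeetingProbability} to infect the set $W$.

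What the paper's route buys is modularity and cleaner dependency handling: since the seed is a single walk, Lemmas~\ref{lemm:FirstPhase2}--\ref{lemm:SecondPhase2} apply without modification, and you avoid the two-sided conditioning on positions at both $t$ and $t+T_2$ that defines your set $W$. What your route buys is directness---you skip the intermediate $\log^2 n$ stage. One small point you should patch: your induction as written establishes the invariant only at window \emph{boundaries}, whereas the lemma asks for every $t$ in the interval; you need the additional (easy) observation that the $\geq c\log^3 n$ informed agents present at a boundary each stay within distance $\ell_2$ of $Q$ for the next $T_2$ steps with constant probability, so a Chernoff bound keeps $\bigOm{\log^3 n}$ of them nearby throughout the window.
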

\begin{proof}
Consider an agent $a \in A_R$ and let $I_Q(t)$ denote the set of informed agents
inside $Q$ at time $t \geq \tau_2$.

By Lemma~\ref{lemm:SecondPhase2}, the number $N_R(\tau_1)$ of agents inside $R$ at time $\tau_1$ is
at least $|A_R| = \bigOm{\log^3 n}$.
Since each of the agents of $A_R$ was inside cell $Q$ at time $t_Q$,
the probability $r'$ that $a$ is still inside $Q$ at time $\tau_2$ is
$ \bigTh{\ell_1^2 / \ell_2^2} = \bigTh{1}$.
Since the agents perform independent random walks,
we can apply a concentration bound and get that
\[
	\card{I_Q(\tau_2)} \geq \frac{1}{2} r' N_R(\tau_1) \geq \frac{1}{8} \left(1-\frac{2}{e}\right) r' \frac{4q \log^3 n}{c_3} \tag{1}
\]
with probability greater than $1 - 1 / n^5$ for a sufficiently large $n$.
Next, observe that the probability that one of the agents  of $I_Q(\tau_2)$
is still inside $R$ (i.e., is within distance $\ell_2$ from $Q$)
at time $\tau_3 = \tau_2 + T_1 + T_2$
is at least $r = (1-\frac{2}{e})$ by Lemma~\ref{lemm:props}.
Therefore, applying again the concentration bound,
we have that,
with probability greater than $1 - 1 / n^5$ (for a sufficiently large $n$),
there are at least
\[
	N_R(t) \geq \frac{r' r^2}{4 c_2} q \log^3 n \tag{2}
\]
informed nodes within distance $\ell_2$ from $Q$
in each of the time instants $\tau_2 \leq t \leq \tau_3$.

Now consider the time interval $[\tau_2, 8 n \log^2 n]$
and partition it into consecutive, non-overlapping subintervals of length $T_1 + T_2$.
Let $t_k = \tau_2 + (k-1)(T_1 + T_2)$ denote the beginning of the $k$-th subinterval.
For any subinterval $k$, we show that
(a) $\card{I_Q(t_k)} \geq 1$ and
(b) at least $\bigOm{\log^3 n}$ informed agents are inside $R$
during the $k$-th subinterval.
Note that, by (1) and (2),
the two properties hold for the first subinterval $[\tau_2, \tau_3)$
with probability at least $1 - 1/n^5$.

The key idea for extending the properties to $k>1$ is that
we can pick an arbitrary informed agent $a$ inside $Q$
at time $t_k$ and make it play the role of the explorer of $Q$.
More precisely, by mimicking the analyses of Lemmas~\ref{lemm:FirstPhase2} and~\ref{lemm:SecondPhase2},
provided that we replace $t_Q$ with $t_k$, 
we can show that the presence of the explorer ensures the existence of $\bigOm{\log^3 n}$ 
informed agents at the beginning of the next interval. 
This observation, combined with (1) and (2), proves that if properties (a)-(b) hold
for the $k$-th subinterval, they will hold also for the $(k+1)$-th subinterval.

Applying the union bound over $\bigO{n \log^2 n}$ subintervals concludes the proof.
\end{proof}

We are now ready to prove the main theorem of this subsection:
\begin{theo}
\label{theo:UBSpreadingTime2}
With high probability,
\[
	\gt = \bigOt{\frac{n}{\sqrt{m}}}.
\]
\end{theo}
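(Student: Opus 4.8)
The plan is to establish the tail bound $\prob{\bt^i \geq c_4 n \log^{3/2} n / \sqrt{m}} \leq 1/n^2$ for the spreading time $\bt^i$ of an arbitrary fixed rumor, and then take a union bound over all rumors. First I would dispose of the regime $m = \bigO{\log^3 n}$ with the observation already recorded at the start of the subsection: by Lemma~\ref{lemm:MeetingProbability} applied with the $L_1$ diameter $2\sqrt{n} - 2$ of $G_n$, a single walk meets every other walk within $8 n \log^2 n$ steps with probability $1 - 1/n^2$, and $8 n \log^2 n = \bigOt{n/\sqrt{m}}$ when $m$ is polylogarithmic. For the main regime $m = \bigOm{\log^3 n}$ I would pass to the binomial model, where $\tilde{m} = \bigTh{m}$ with high probability by Lemma~\ref{lemm:ConcAgents}, and track one rumor on the tessellation of $G_n$ into cells of side $\ell_1$ from Lemma~\ref{lemm:FirstPhase2}.

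The core of the argument is a wavefront of informed agents sweeping across the grid of cells. The cell hosting the source of the rumor is reached at time $0$. Lemmas~\ref{lemm:FirstPhase2} and~\ref{lemm:SecondPhase2} together show that once a cell $Q$ is reached at time $t_Q$, then by time $t_Q + T_1 + T_2$ the cell is conquered (every agent of $A_R$ is informed) and each of its at most four neighboring cells has itself been reached, where $T_1 + T_2 = \bigO{\ell_1^2}$ because $\ell_2 = \bigTh{\ell_1}$. The essential point, supplied by Lemma~\ref{lemm:ConqCell}, is that a reservoir of $\bigOm{\log^3 n}$ informed agents persists within distance $\ell_2$ of every conquered cell throughout the whole horizon $[\tau_2, 8 n \log^2 n]$, so the wavefront never stalls. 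Iterating the reach-then-conquer step along a shortest chain of cells, and using that the cell distance between any two cells is at most $2\sqrt{n}/\ell_1$, every cell is reached and conquered within
\[
	\frac{2\sqrt{n}}{\ell_1}(T_1 + T_2) = \bigO{\sqrt{n}\,\ell_1} = \bigO{\frac{n \log^{3/2} n}{\sqrt{m}}}
\]
steps, after substituting $\ell_1 = \bigTh{\sqrt{n \log^3 n / m}}$.

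It remains to inform every individual agent, not merely to conquer every cell, but this follows from Lemma~\ref{lemm:ConqCell} applied once more: after the cell containing a still-uninformed agent $a$ has been conquered, $a$ has $\bigOm{\log^3 n}$ informed agents within distance $\ell_2$ at every subsequent step, so by Lemma~\ref{lemm:MeetingProbability} the probability that $a$ avoids all of them over the next $\bigO{\ell_2^2}$ steps is at most $\exp\left(-c \log^2 n\right) < 1/n^6$ for a constant $c>0$. Collecting the per-cell failure probabilities of Lemmas~\ref{lemm:FirstPhase2}--\ref{lemm:ConqCell} over the $\bigO{n/\ell_1^2} = \bigO{n}$ cells, together with the per-agent failures, yields $\prob{\bt^i \geq c_4 n \log^{3/2} n/\sqrt{m}} \leq 1/n^2$; a final union bound over the $\tilde{m} = \bigO{m}$ rumors then gives $\gt = \bigOt{n/\sqrt{m}}$ with high probability.

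The main obstacle is the time dependency among the walks: the explorer that first carries the rumor into a cell may diffuse away before it can seed the neighboring cells, so one cannot rest the propagation on any single informed agent. The mechanism that resolves this---already embedded in Lemma~\ref{lemm:ConqCell}---is to split the horizon into subintervals of length $T_1 + T_2$ and, at the start of each, promote a fresh in-cell informed agent to the role of explorer, re-establishing the invariant of $\bigOm{\log^3 n}$ nearby informed agents while keeping the walks within a subinterval independent of the conditioning used for the previous ones. Checking that this invariant survives across all $\bigO{n \log^2 n}$ subintervals with failure probabilities small enough to survive the union bounds is the delicate part of the analysis.
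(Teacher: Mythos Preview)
Your proposal is correct and follows essentially the same approach as the paper: dispatch the polylogarithmic-$m$ regime via the diameter-based meeting argument, then for $m=\bigOm{\log^3 n}$ run the cell wavefront using Lemmas~\ref{lemm:FirstPhase2}--\ref{lemm:SecondPhase2} so that all cells are reached within $(2\sqrt{n}/\ell_1)(T_1+T_2)$ steps, and finish off each individual agent using the $\bigOm{\log^3 n}$ nearby informed agents guaranteed by Lemma~\ref{lemm:ConqCell} together with Lemma~\ref{lemm:MeetingProbability}, before union-bounding over all rumors. The only cosmetic difference is that the paper phrases the final step by defining $t_a$ as the first time agent $a$ lies in a reached or conquered cell, which handles the fact that $a$ is itself moving a bit more explicitly than your ``after the cell containing $a$ has been conquered'' formulation.
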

\begin{proof}
As observed at the beginning of the subsection,
we can limit ourselves to the case $m=\bigOm{\log^3 n}$.
Consider first the spreading of the $i$-th rumor, and 
tessellate the domain with cells of side $\ell_1$, specified  in Lemma~\ref{lemm:FirstPhase2}.

Consider a cell $Q$ reached for the first time at time $t_Q$.
By Lemma~\ref{lemm:SecondPhase2},
we know that each of the neighboring cells of $Q$
is reached within time $\tau_2 = t_Q + T_1 + T_2$
with probability $1 - 1/n^5$.
Therefore, all  cells are reached within time
$({2\sqrt{n}}/{\ell_1})(T_1 + T_2)$
with probability $1 - 1/n^4$.

For any agent $a$, let $t_a$ be the first time when the agent is inside a cell
that is either reached or conquered at time $t_a$, and observe that $t_a \leq (2\sqrt{n}/\ell_1 +1)(T_1+T_2)$
with probability $1 - 1/n^4$.
Using the result of Lemma~\ref{lemm:ConqCell} about the number of informed agents near the cell
where agent $a$ resides at time $t_a$,
an argument similar to the one used in the proof of Point (3) of Lemma~\ref{lemm:SecondPhase2}
shows that $a$ will be informed by time $t_a + T_2$ with probability $1 - 1/n^3$.

Putting it all together,
by setting $c_4 = 64 (5+2\sqrt{2}) \sqrt{q/c_3}$, we have proved that
the broadcasting time $\bt^i$ of the $i$-th rumor satisfies
\[
	\prob{\bt^i \geq \frac{c_4 n \log^{3/2} n}{\sqrt{m}}} \leq \frac{1}{n^2}.
\]

The theorem follows by observing that the broadcasting time of a rumor
does not depend on the origins of the other rumors, so
we can take the union bound over all $m = \bigO{n}$ different rumors
and conclude that $\gt = \bigO{n \log^{3/2} n / \sqrt{m}}$
with probability $1 - 1/n$.
\end{proof}

\subsection{A lower bound on the gossiping time}
In the Gossiping scenario uninformed agents move, hence we need to resort to a lower bound
argument which takes such movements into account. We define the \emph{informed area} $\I(t)$ at time $t$
as the set of places visited by an informed agent up to time $t$.
The \newterm{frontier} of $\I(t)$ is the border separating
the informed area from the remaining places.
To simplify the exposition,
we imagine that the rumor travels from left to right and
consider the grid node $\point{x}(t)$
of the rightmost informed agent at time $t$.
By definition, the informed area lies to  the left of $\point{x}(t)$ and
we need to show that there is  a sufficiently large value $T$ such that, at time $T$,
there is at least one uniformed agent right of $\point{x}(T)$.

First, we can prove that,
with probability $1 - 1/n^2$,
at each time instant  $0 \leq t \leq 8 n \log^2 n$,
every island of parameter
$\gamma =  \sqrt{n / (4 e^6 m)}$
has no more than $\log n$ agents.
The proof is similar to the argument used
in Lemma~\ref{lemm:NoBigIslands} for the broadcasting scenario:
\begin{lemm}
\label{lemm:NoBigNeighborhood}
Let $\gamma = \sqrt{n / (4 e^6 m)}$.
Then, the probability that there exists
an island of parameter $\gamma$
in any time instant $0 \leq t \leq 8 n \log^2 n$
with more than $\log n$ agents is at most $1/n^2$.
\end{lemm}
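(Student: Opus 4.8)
The plan is to adapt the union-bound argument of Lemma~\ref{lemm:NoBigIslands} so that it holds simultaneously at every one of the polynomially many time instants in the window $[0, 8n\log^2 n]$. As in the static case, I would upper bound the probability that some island of parameter $\gamma$ contains at least $k$ agents at a fixed time $t$ by the probability that $G_t(\gamma)$ contains a spanning tree on $k$ labeled vertices of $A$. The essential point is that, because each agent performs an independent random walk and the stationary distribution on $G_n$ (with self-loops) is uniform, at any fixed time $t \geq 0$ the marginal location of each agent is still uniform over the $n$ nodes, and the locations of distinct agents remain independent. Hence the per-edge probability that two specified agents lie within distance $\gamma$ at time $t$ is still at most $4\gamma^2/n$, and the entire computation of $\prob{\B_k}$ at a \emph{fixed} time $t$ goes through verbatim, giving $\prob{\B_k(t)} \leq (e|A|/k^2)\,e^{-2(k-1)}$ with $k = 1+\log n$.

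The one numerical change is in the constant inside $\gamma$: here $\gamma = \sqrt{n/(4e^6 m)}$ rather than $\sqrt{n/(4e^3 m)}$. The extra factor of $e^3$ in the denominator strengthens the bound on the per-edge probability by a factor of $e^{-3}$, so that after substituting $k=1+\log n$ the single-time failure probability improves from $O(1/n)$ to roughly $O(1/n^4)$ (concretely, the $e^{-2(k-1)}$ of the original is replaced by an $e^{-2(k-1)}$ multiplied by an additional $e^{-3(k-1)} \leq e^{-3\log n} = n^{-3}$). Thus at any single fixed time $t$, the probability of a bad island of size exceeding $\log n$ is at most $1/n^4$.

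The final step is a union bound over time. Since there are at most $8n\log^2 n + 1 = \bigO{n\log^2 n}$ integer time instants in the interval $[0, 8n\log^2 n]$, summing the per-time failure probability of $1/n^4$ over these $\bigO{n\log^2 n}$ instants yields a total failure probability of $\bigO{\log^2 n / n^3} \leq 1/n^2$ for sufficiently large $n$. This establishes the claimed bound.

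The main obstacle I anticipate is justifying the independence and uniformity of agent locations at an arbitrary fixed time $t>0$, which is what licenses reusing the $4\gamma^2/n$ per-edge estimate and the tree-counting bound unchanged; this is where the self-loops added to the boundary (equalizing the stationary distribution to uniform) do the real work. Once that marginal uniformity is in hand, the tree-subgraph counting argument of Lemma~\ref{lemm:NoBigIslands} and the boosted constant in $\gamma$ make the per-time estimate strong enough to absorb the union bound over the $\bigO{n\log^2 n}$ time steps, with everything else being the same routine Chernoff-free combinatorial calculation already carried out in the static case.
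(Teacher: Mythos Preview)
Your proposal is correct and follows essentially the same approach as the paper's own proof: fix a time $t_0$, use uniformity and independence of the agents' locations at that time to rerun the tree-counting bound of Lemma~\ref{lemm:NoBigIslands} with the smaller $\gamma$, obtain a per-time failure probability of at most $1/n^4$, and then take a union bound over the $\bigO{n\log^2 n}$ time instants. The paper states the uniformity-at-time-$t$ fact in one line, while you spell out why it holds (uniform start plus uniform stationary distribution from the self-loops), which is a welcome clarification but not a different argument.
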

\begin{proof}
Since at any given time the agents are uniformly distributed in $G_n$,
the probability that a given agent is within distance $\gamma$
of another given agent at time $t_0$ is bounded by $4 \gamma^2 / n$.
Fix a time instant $t_0$ and let $\B_k(t_0)$ denote the event that
there exists an island with at least $k>\log n$ elements at time $t_0$. Then,
\[
	\prob{\B_k(t_0)}
		\leq \binom{m}{k} k^{k-2} \left(\frac{4 \gamma^2}{n}\right)^{k-1}
		\leq \left(\frac{em}{k}\right)^k k^{k-2} \left(\frac{4 \gamma^2}{n}\right)^{k-1}.
\]

Using definition of $\gamma$ and the bound $k \geq 1 + \log n$ and $m \leq n$, we have 
\[
	\prob{\B_k(t_0)} \leq \frac{em}{k^2} e^{-5(k-1)}\leq \frac{e n}{k^2} \frac{1}{n^5}
		\leq \frac{1}{n^4},
\]
for a sufficiently large $n$.
Applying the union bound on $\bigO{n \log^2 n}$ time instants concludes the proof.
\end{proof}

Next we show that, with high probability,
the frontier of the informed area cannot advance too fast.
\begin{lemm}
\label{lemm:SlowFrontier}
Let $\gamma =  \sqrt{n / (4 e^6 m)}$ and let  $t_0$
and $t_1 = t_0 + \gamma^2/(36 \log n)$ be two time steps. 
Then, with probability $1 - 1/n^2$,
\[
	\dist{\point{x}(t_1) - \point{x}(t_0)} \leq (\gamma \log n) / 2.
\]
\end{lemm}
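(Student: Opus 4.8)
The plan is to combine a motion bound with the island structure to show that, over such a short interval, the rumour can be relayed only within a single small island of the time-$t_0$ configuration, so that its net rightward progress is tiny. Write $\rho(a,s)$ for the horizontal coordinate of agent $a$ at time $s$, so that the frontier satisfies $\rho(\point{x}(t))=\max\{\rho(a,t): a\text{ informed at }t\}$; it then suffices to control the rightward advance $\rho(\point{x}(t_1))-\rho(\point{x}(t_0))$.

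First I would fix two high-probability events. By Lemma~\ref{lemm:props}(\ref{poin:dev}) applied with $\lambda=\Theta(\sqrt{\log n})$ and a union bound over the (at most $n$) agents, every agent stays within $L_1$-distance $\gamma/4$ of its time-$t_0$ position for all of $[t_0,t_1]$, with probability $1-o(1/n^2)$; this is exactly where the constant $36$ in $t_1-t_0=\gamma^2/(36\log n)$ enters, since it makes $\lambda\sqrt{t_1-t_0}$ a small constant fraction of $\gamma$. Simultaneously, by Lemma~\ref{lemm:NoBigNeighborhood}, every island of parameter $\gamma$ has at most $\log n$ agents at time $t_0$. I condition on both.

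Next I would trace the rumour backwards. Let $a^\ast$ be the rightmost informed agent at time $t_1$ and let $a_0\to a_1\to\cdots\to a_k=a^\ast$ be its infection chain, where $a_{i-1}$ infects $a_i$ at a meeting time $s_i\in(t_0,t_1]$ and $a_0$ is already informed at $t_0$. A meeting means co-location, so by the motion bound $a_{i-1}$ and $a_i$ lie within $\gamma/2<\gamma$ of each other at time $t_0$; hence consecutive chain agents are adjacent in $G_{t_0}(\gamma)$ and the whole chain lies in one island at $t_0$, which by the second event contains at most $\log n$ agents. In particular $k+1\le\log n$. Cancelling the meeting terms (the two agents share a node at each $s_i$), the advance telescopes over the legs that partition $[t_0,t_1]$:
\[
  \rho(a^\ast,t_1)-\rho(a_0,t_0)=\sum_{i=0}^{k}\bigl(\rho(a_i,e_i)-\rho(a_i,b_i)\bigr),\qquad [b_i,e_i]\subseteq[t_0,t_1].
\]

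The heart of the argument, and its main obstacle, is that this chain is \emph{data-dependent}: which agents relay the rumour, and when, is itself random and correlated with the walks, so I cannot simply treat the right-hand side as a single walk. I would handle this by a union bound over all candidate chains. Because each relevant chain has at most $\log n$ agents and at most $\log n$ meeting times drawn from the $t_1-t_0\le n$ steps, there are only $e^{O(\log^2 n)}$ chain-and-leg patterns; and for each \emph{fixed} pattern the right-hand side is a sum of independent random-walk increments (distinct agents, disjoint legs) over a total of at most $t_1-t_0$ steps, hence concentrated at scale $\sqrt{t_1-t_0}=\gamma/(6\sqrt{\log n})$. Exceeding $(\gamma\log n)/2$ is therefore about $\log^{3/2}n$ standard deviations out, with probability $e^{-\Omega(\log^3 n)}$, which dwarfs the $e^{O(\log^2 n)}$ union. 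On the conditioned events the realised chain is among those bounded, so $\rho(a^\ast,t_1)-\rho(a_0,t_0)\le(\gamma\log n)/2$; since $a_0$ is informed at $t_0$ we have $\rho(a_0,t_0)\le\rho(\point{x}(t_0))$, and the claimed bound on $\dist{\point{x}(t_1)-\point{x}(t_0)}$ follows.
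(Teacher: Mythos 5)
Your first two paragraphs are exactly the paper's argument: condition on a uniform motion bound for all agents over $[t_0,t_1]$ (Lemma~\ref{lemm:props}, part~1, plus a union bound over agents) and on the absence of large islands (Lemma~\ref{lemm:NoBigNeighborhood}); then any meeting during the interval links two agents whose time-$t_0$ positions are within $\gamma$, so every infection chain stays inside a single island of $G_{t_0}(\gamma)$, which has at most $\log n$ agents. The paper essentially stops there, and so could you: on these two events the conclusion is \emph{deterministic}, since the island has at most $\log n$ vertices (hence $t_0$-diameter at most $\gamma(\log n-1)$) and each agent in it has moved only a constant fraction of $\gamma$, so whichever chain is realized advances by at most ($\le\log n$ legs) times (a constant fraction of $\gamma$ per leg). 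Your third paragraph is therefore superfluous and mislocates the difficulty: the ``data-dependence of the chain'' is already neutralized by the fact that the motion bound is conditioned to hold simultaneously for \emph{all} agents, so no enumeration of chain-and-leg patterns and no Azuma step for the telescoped sum are needed (that argument is not wrong, just redundant). One quantitative slip is worth fixing: with threshold $\gamma/4$ the deviation parameter is $\lambda=\tfrac{3}{2}\sqrt{\log n}$, giving a per-agent failure probability of $2n^{-9/8}$, which does not survive a union bound over up to $n$ agents and falls far short of your claimed $1-o(1/n^2)$. You should take $\gamma/2$ as the paper does ($\lambda=3\sqrt{\log n}$, failure $2n^{-9/2}$ per agent), at the cost of the final bound degrading to roughly $\gamma\log n$ rather than $(\gamma\log n)/2$ --- the same factor-2 slack the paper's own proof silently absorbs, and immaterial to Theorem~\ref{theo:LBSpreadingTime2} up to constants.
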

\begin{proof}
Consider the island $I = \isla{{a_0}}{\gamma}{t_0}$
of the informed agent $a_0$ located at node $\point{x}(t_0)$ at time $t_0$.

By Lemma~\ref{lemm:props}, with probability $1-2/n^3$ an agent cannot cover
a distance of more than $\gamma / 2$ in $\gamma^2/(36 \log n)$ time steps.
Thus, with probability $1-1/n^2$, up to time $t_1$ the rumor
cannot propagate (directly or through intermediate agents) outside the set of agents $I$,
and the farthest it can get from  $\point{x}(t_0)$ by members of $I$
is bounded by $(\gamma \log n) / 2$
since the island has no more than $\log n$ agents
by Lemma~\ref{lemm:NoBigNeighborhood}.

Then, $\dist{\point{x}(t_1) - \point{x}(t_0)} \leq (\gamma \log n)/ 2$
with probability $1 - 1/n^2$.
\end{proof}

Finally, we can prove the main theorem of the subsection:
\begin{theo}
\label{theo:LBSpreadingTime2}
With high probability,
\[
	\gt = \bigOm{\frac{n}{\sqrt{m}\log^2 n}}.
\]
\end{theo}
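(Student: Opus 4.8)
The plan is to mirror the broadcasting-scenario lower bound argument (Theorem~\ref{theo:LBSpreadingTime}), but now accounting for the fact that uninformed agents move. The overall strategy is to fix a particular rumor, track its rightmost informed agent $\point{x}(t)$, and argue that the informed frontier advances too slowly to traverse the full width of the grid in time $o\!\left(n/(\sqrt{m}\log^2 n)\right)$. First I would invoke Lemma~\ref{lemm:Distance} (or an analogous statement) to guarantee that, with probability $1-1/n^2$, there is at least one agent whose origin is at $L_1$ distance $\geq \sqrt{n}/2$ from the agent originally carrying the chosen rumor; since agents move only a limited distance in the relevant time horizon, this separation must be covered by the frontier before gossiping can complete.

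The heart of the argument is the speed limit established in Lemma~\ref{lemm:SlowFrontier}: over each time window of length $\gamma^2/(36\log n)$, the rightmost informed position advances by at most $(\gamma\log n)/2$ with probability $1-1/n^2$. The next step is to chain these windows together. Partitioning the interval $[0,T]$ into $T/(\gamma^2/(36\log n))$ consecutive windows and applying the union bound over all of them, I would conclude that to push the frontier a distance $\sqrt{n}/2$ the process needs at least
\[
	\frac{\sqrt{n}/2}{(\gamma\log n)/2}
\]
windows, each of length $\gamma^2/(36\log n)$, giving a total time of at least
\[
	\frac{\sqrt{n}}{\gamma\log n}\cdot\frac{\gamma^2}{36\log n}
		= \frac{\gamma\sqrt{n}}{36\log^2 n}.
\]
Substituting $\gamma = \sqrt{n/(4e^6 m)}$ then yields $\gt = \bigOm{n/(\sqrt{m}\log^2 n)}$, which is exactly the claimed bound.

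The main obstacle I anticipate is controlling the union bound over the many time windows while keeping each per-window failure probability small enough. Since the time horizon of interest is $\bigO{n\log^2 n}$ and each window has length $\bigTh{\gamma^2/\log n} = \bigTh{n/(m\log n)}$, the number of windows is $\bigO{m\log^3 n}$, which is polynomial in $n$; this is precisely why Lemmas~\ref{lemm:NoBigNeighborhood} and~\ref{lemm:SlowFrontier} were stated to hold uniformly over all $0\leq t\leq 8n\log^2 n$ with failure probability $1/n^2$, so that a union bound over the windows still yields a high-probability conclusion. A secondary subtlety is ensuring that the ``rightmost informed agent'' abstraction genuinely captures the spreading constraint: because the rumor can only propagate within an island before jumping to the next one, and islands have at most $\log n$ agents by Lemma~\ref{lemm:NoBigNeighborhood}, the frontier truly cannot leapfrog ahead faster than Lemma~\ref{lemm:SlowFrontier} permits. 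Once these uniform-in-time guarantees are in place, the remaining calculation is the routine substitution carried out above.
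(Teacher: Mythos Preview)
Your proposal is essentially the paper's proof: invoke Lemma~\ref{lemm:Distance} for a far-away agent, chain the per-window bound of Lemma~\ref{lemm:SlowFrontier} to limit the frontier's total advance, and substitute $\gamma$. One point you wave past, however, is the movement of the target uninformed agent. You remark that ``agents move only a limited distance in the relevant time horizon,'' but then compute the number of windows needed for the frontier \emph{alone} to traverse the full $\sqrt{n}/2$ gap. In the gossiping scenario the target walks too, so the gap can close from both sides; as written, your argument does not rule this out. The paper handles this explicitly: it fixes $T=n/(144e^3\sqrt{m}\log^2 n)$, shows via Lemma~\ref{lemm:SlowFrontier} that the frontier advances at most $\sqrt{n}/4$ in $T$ steps, and then applies Lemma~\ref{lemm:props} to bound the target agent's displacement by $2\sqrt{T\log n}<\sqrt{n}/4$, so the initial separation of $\sqrt{n}/2$ is not closed. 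Adding that one-line application of Lemma~\ref{lemm:props} (and adjusting the $\sqrt{n}/2$ in your window count to $\sqrt{n}/4$, which only changes constants) completes your argument.
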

\begin{proof}
By Lemma~\ref{lemm:Distance}, with high probability
there exists an agent $a$ whose distance at time $0$
from the source was at least $\sqrt{n}/2$.

Let $T = n / (144 e^3 \sqrt{m} \log^2 n)$ and $\gamma =  \sqrt{n / (4 e^6 m)}$.
By Lemma~\ref{lemm:SlowFrontier},
with probability $1 - 1/n$ the frontier cannot move right in $T$ steps more than
$(\gamma \log n / 2) T / (\gamma^2 / (36 \log n)) = \sqrt{n} / 4$.

By Lemma~\ref{lemm:props}, with probability $1 - 2/n^2$, agent $a$ cannot move left more than 
$2 \sqrt{T\log n} < \sqrt{n}/4$ and thus that agent is not informed at time $T$.
Hence, the gossiping time is at least $\gt > T = \bigOm{n/(\sqrt{m}\log^2 n)}$ with probability $1 - 1/n$.
\end{proof}

\noindent {\bf Acknowledgment:} Many thanks to Jeff Steif for referring us to some crucial references.

\phantomsection
\addcontentsline{toc}{chapter}{References}
\bibliographystyle{siam}
\bibliography{references}

\begin{thebibliography}{10}

\bibitem{AldousF98}
{\sc D.~Aldous and J.~Fill}, {\em Reversible Markov Chains and Random Walks on
  Graphs}, Unpublished manuscript, 1998.

\bibitem{AlonAKKLT08}
{\sc N.~Alon, C.~Avin, M.~Kouck\'y, G.~Kozma, Z.~Lotker, and M.~R. Tuttle},
  {\em Many random walks are faster than one}, in Proc. SPAA, 2008,
  pp.~119--128.

\bibitem{AlvesMP02}
{\sc O.~S.~M. Alves, F.~P. Machado, and S.~Y. Popov}, {\em The shape theorem
  for the frog model}, The Annals of Applied Probability, 12 (2002),
  pp.~533--546.

\bibitem{BroderKRU94}
{\sc A.~Z. Broder, A.~R. Karlin, P.~Raghavan, and E.~Upfal}, {\em Trading space
  for time in undirected $s-t$ connectivity}, SIAM Journal of Computing, 23
  (1994), pp.~324--334.

\bibitem{ChandraRRST97}
{\sc A.~K. Chandra, P.~Raghavan, W.~L. Ruzzo, R.~Smolensky, and P.~Tiwari},
  {\em The electrical resistance of a graph captures its commute and cover
  times}, Computational Complexity, 6 (1997), pp.~312--340.

\bibitem{ChierichettiLP10}
{\sc F.~Chierichetti, S.~Lattanzi, and A.~Panconesi}, {\em Almost tight bounds
  for rumour spreading with conductance}, in Proc. STOC, 2010, pp.~399--408.

\bibitem{ClementiMPS09}
{\sc A.~E.~F. Clementi, A.~Monti, F.~Pasquale, and R.~Silvestri}, {\em
  Information spreading in stationary markovian evolving graphs}, in Proc.
  IPDPS, 2009, pp.~1--12.

\bibitem{ClementiPS09}
{\sc A.~E.~F. Clementi, F.~Pasquale, and R.~Silvestri}, {\em {MANETS}: High
  mobility can make up for low transmission power}, in Proc. ICALP, 2009,
  pp.~387--398.

\bibitem{DimitriouNS06}
{\sc T.~Dimitriou, S.~Nikoletseas, and P.~Spirakis}, {\em The infection time of
  graphs}, Discrete Applied Mathematics, 154 (2006), pp.~2577--2589.

\bibitem{ElsasserS09}
{\sc R.~Els\"asser and T.~Sauerwald}, {\em Tight bounds for the cover time of
  multiple random walks}, in Proc. ICALP, 2009, pp.~415--426.

\bibitem{Feller68}
{\sc W.~Feller}, {\em An Introduction to Probability Theory and Its
  Applications, Vol.~I}, Wiley, 3~ed., 1968.

\bibitem{HromkovicKPRU05}
{\sc J.~Hromkovic, R.~Klasing, A.~Pelc, P.~Ruzicka, and W.~Unger}, {\em
  Dissemination of Information in Communication Networks}, Springer, Berlin,
  2005.

\bibitem{KestenS03}
{\sc H.~Kesten and V.~Sidoravicius}, {\em A shape theorem for the spread of an
  infection}.
\newblock arXiv:math/0312511v1 [math.PR], 2003.

\bibitem{Lawler91}
{\sc G.~F. Lawler}, {\em Intersections of random walks}, Birkh\"auser, Boston,
  1991.

\bibitem{MitzenmacherU05}
{\sc M.~Mitzenmacher and E.~Upfal}, {\em Probability and Computing}, Cambridge
  University Press, Cambridge, 2005.

\bibitem{Torney86}
{\sc D.~C. Torney}, {\em Variance of the range of a random walk}, Journal of
  Statistical Physics, 44 (1986), pp.~49--66.

\bibitem{WangKK08}
{\sc Y.~Wang, S.~Kapadia, and B.~Krishnamachari}, {\em Infection spread in
  wireless networks with random and adversarial node mobilities}, in Proc.
  SIGMOBILE Workshop on Mobility Models, 2008, pp.~17--24.

\bibitem{Zuckerman92}
{\sc D.~Zuckerman}, {\em A technique for lower bounding the cover time}, SIAM
  Journal of Discrete Mathematics, 5 (1992), pp.~81--87.

\end{thebibliography}

\end{document}